\newcommand{\dist}{d}
\newcommand{\vcut}{V_{cut}}
\newcommand{\vborder}{V_{border}}
\newcommand*{\true}{\text{\normalfont True}}
\newcommand*{\false}{\text{\normalfont False}}
\tikzstyle{vertex}	= [circle,fill=black!20,minimum size=15pt,inner sep=0pt]
\tikzstyle{svertex}	= [circle,fill=black!20,minimum size=12pt,inner sep=0pt]
\tikzstyle{edge} 	= [draw,thick]
\tikzstyle{poss} 	= [font=\small]
\tikzset{top/.style={baseline=(current bounding box.north)}}
\tikzset{mid/.style={baseline=(current bounding box.center)}}
\newcommand{\q}[1]{\textcolor{brown}{#1}}
\newcommand{\f}[1]{\textcolor{blue}{#1}}
\newenvironment{updated}{\color{black}}{}
\newtheorem{theorem}{Theorem}[section]
\newtheorem{lemma}[theorem]{Lemma}
\newtheorem{example}[theorem]{Example}
\newtheorem{definition}[theorem]{Definition}
\begin{document}

\title{Hierarchical Cut Labelling \\-- Scaling Up Distance Queries on Road Networks}






\author{Muhammad Farhan$^{1}$, Henning Koehler$^{2}$, Robert Ohms$^{1}$, Qing Wang$^{1}$}
\affiliation{
\institution{$^{1}$Australian National University, Canberra, Australia, $^{2}$Massey University, Palmerston North, New Zealand}
\city{}
\country{}}
\email{{muhammad.farhan, qing.wang}@anu.edu.au; h.koehler@massey.ac.nz; robert.e.ohms@gmail.com}

\begin{abstract}
Answering the shortest-path distance between two arbitrary locations is a fundamental problem in road networks. Labelling-based solutions are the current state-of-the-arts to render fast response time, which can generally be categorised into hub-based labellings, highway-based labellings, and tree decomposition labellings. Hub-based and highway-based labellings exploit hierarchical structures of road networks with the aim to reduce labelling size for improving query efficiency. However, these solutions still result in large search spaces on distance labels at query time, particularly when road networks are large. Tree decomposition labellings leverage a hierarchy of vertices to reduce search spaces over distance labels at query time, but such a hierarchy is generated using tree decomposition techniques, which may yield very large labelling sizes and slow querying. In this paper, we propose a novel solution \emph{hierarchical cut 2-hop labelling (HC2L)} to address the drawbacks of the existing works. Our solution combines the benefits of hierarchical structures from both perspectives - reduce the size of a distance labelling at preprocessing time and further reduce the search space on a distance labelling at query time. At its core, we propose a new hierarchy, \emph{balanced tree hierarchy}, which enables a fast, efficient data structure to reduce the size of distance labelling and to select a very small subset of labels to compute the shortest-path distance at query time. To speed up the construction process of HC2L, we
further propose a parallel variant of our method, namely HC2L$^p$. We have evaluated our solution on 10 large real-world road networks through extensive experiments. The results show that our method is 1.5-4 times faster in terms of query processing while being comparable in terms of labelling construction time and achieving up to 60\% smaller labelling size compared to the state-of-the-art approaches.
\end{abstract}

\keywords{Shortest-path distance; road network; vertex cut; hierarchy}

\maketitle

\section{Introduction}
\begin{updated}
A \emph{distance query} in road networks is to find the length of a shortest path between any two given locations. This is a fundamental building block with numerous real-world applications, such as GPS navigation \cite{goldberg2005computing}, route planning \cite{fan2010improvement}, traffic monitoring \cite{kriegel2007proximity}, and point of interest (POI) recommendation \cite{yawalkar2019route}. Nowadays, computational resources such as memory and storage become readily available, e.g., road networks such as USA and EUR with tens of millions of vertices can be easily accommodated by a single server. However, many of these applications have low latency requirements, arising from the need to compute thousands to millions of distance queries per second as part of a more complex problem, which itself needs to be solved frequently, e.g. matching taxi drivers to passengers, optimizing delivery routes with multiple pick up and drop off points that can change dynamically, or providing recommendation on k-nearest POIs to their customers. 

For example, ride-hailing companies such as Uber are often required to compute millions of shortest-path distances between cars and customers each second \cite{zhang2021efficient,huang2021learning}
(e.g., between the locations of 1k cars and 10k customers) in order to process customers requests, e.g., finding nearest cars to customers. In such cases, even if answering one distance query only takes 1 microsecond, processing 10 million distance queries however would still amount to 10 seconds. Thus, it is very important to improve the performance of computing shortest-path distances for these applications to ensure good service quality to their customers. 



%
\end{updated}





\vspace{0.1cm}
\noindent\textbf{Related Work.~}
We briefly review the existing work for answering shortest-path distance queries in road networks.

\vspace{0.1cm}
\noindent\emph{\underline{Search-based approaches.}} A traditional approach to answering a distance query is to use the Dijkstra's algorithm \cite{tarjan1983data} which can compute the length of a shortest-path from a source vertex to a destination vertex in $O(|E| + |V|log|V|)$ time. 
This is however very slow in reality. Particularly, for pairs of vertices that are far apart from each other, the search space is large. To improve search efficiency, a bidirectional scheme can be used to run two Dijkstra's searches: one from the source vertex and the other from the destination vertex \cite{Pohl1969BidirectionalAH}. However, road networks are structurally characterised by high diameters and low node degrees, making search-based approaches such as Dijkstra's algorithm highly inefficient. In general, search-based approaches fail to achieve desired response time performance required by many real-world applications that operate on increasingly large road networks. 

To accelerate search by directing the search space towards useful vertices, rather than conducting unrestrained searches, a number of search-based approaches leverage indices. For example, ALT \cite{goldberg2005computing} pre-computes a partial distance index to accelerate the A* search. 
HiTi \cite{jung2002efficient} constructs a hierarchical structure using graph decomposition techniques to accelerate query performance. Highway Hierarchies \cite{10.1007/11561071_51} and Contraction Hierarchies (CH) \cite{geisberger2008contraction}  have shown to prune the search space massively for efficient querying. Moreover, algorithms such as Transit Node Routing (TNR) \cite{arz2013transit}, TRANSIT \cite{bast2006transit} and Arterial Hierarchy (AH) \cite{10.1145/2463676.2465277} integrate coordinate information and divide a road network into multiple hierarchical grids.
Transit Node Routing and TRANSIT then precompute an index which stores distances between vertices with respect to grids. Arterial Hierarchy is an improved version of the CH algorithm. Despite considerable progress, these algorithms still require exploring a large search space for distance queries when vertices are far apart from each other in a road network.

\vspace{0.1cm}
\noindent\emph{\underline{Labelling-based approaches.}}
To address the limitations of search-based approaches, 
 labelling-based approaches for road networks have been studied with great success \cite{abraham2011hub,abraham2012hierarchical,akiba2014fast,ouyang2018hierarchy,akiba2013fast,chen2021p2h,jin2012highway,zhang2022relative}. Instead of performing searches on a road network at query time, these approaches precompute distance labels that fully capture distance information between pairs of vertices, and then perform searches on distance labels at query time to compute distances. Labelling-based approaches can answer distance queries significantly faster than search-based approaches, at the cost of requiring additional space for storing precomputed labels. As such, it is still an open problem how to design algorithms which can give both fast query times with small memory costs. 

The current state-of-the-art labelling-based approaches exploit hierarchical structures of road networks. Most of these approaches satisfy the 2-hop cover property \cite{cohen2003reachability} which requires at least one common vertex in the distance labels $L(s)$ and $L(t)$ to be on a shortest-path between any query pair $(s, t)$, generally falling into three categories: hub-based labellings, highway-based labellings, and tree decomposition labellings. Approaches for hub-based labellings \cite{abraham2011hub,abraham2012hierarchical} generate distance labels for vertices that contain the distances to hub vertices following a vertex ordering computed by conducting CH searches on a road network. This results in a hierarchical structure among distance labels which is closely relating to the labelling size. Approaches for highway-based labellings \cite{akiba2014fast} decompose a road network into disjoint shortest-paths and then construct distance labels for vertices that contain the distances to a subset of decomposed shortest-paths. Similar to the importance of a vertex ordering to hub-based labellings, the order of shortest-paths is important to highway-based labellings for small labelling sizes.
Approaches for tree decomposition labellings \cite{ouyang2018hierarchy,chen2021p2h} first find a hierarchy over vertices in a road network using tree decomposition techniques \cite{bodlaender2006treewidth}. 
Then a distance query $(s, t)$ is answered by searching such a tree-decomposition hierarchy to identify a subset of common vertices in the distance labels $L(s)$ and $L(t)$, and compute the distance between $s$ and $t$.



\vspace{0.1cm}
\noindent\textbf{Present Work.~}From the above analysis, we observe that labelling-based approaches leverage hierarchies over vertices in a road network for distance queries in two ways. One way is to reduce the size of distance labellings. In hub-based labellings and highway-based labellings, a distance query $(s, t)$ needs to search through entire labels $L(s)$ and $L(t)$ when computing a shortest-path distance. Therefore, they aim to find a ``good ordering'' of vertices which can produce distance labels of small sizes, and reduce the search space on $L(s)$ and $L(t)$ in the process. The other way is to assign a hierarchy over vertices, and while querying, use such a hierarchy to reduce the search space to a small subset of entries in a distance labelling. In tree-decomposition labellings, such a hierarchy is obtained by applying existing tree decomposition techniques~\cite{bodlaender2006treewidth}. 


Motivated by the above observations, in this work, we aim to design a solution that combines the benefits of using vertex hierarchies from both perspectives - leveraging a vertex hierarchy to reduce the size of a distance labelling and further reduce the search space on such a distance labelling at query time. This leads to the following questions: 
\begin{itemize}
[leftmargin=*]
\item What are the desirable characteristics of such a hierarchy?
\item How can a hierarchy be designed to reduce the sizes of distance labellings and find ``good hops'' to accelerate search on distance labels simultaneously? 
\item What are the hierarchical properties of distance labellings under this framework?
\end{itemize}

\vspace{0.1cm}
\noindent\textbf{Contributions.~}In this paper, we propose a new labelling-based method which addresses the above questions. Our contributions are summarised as follows:

\begin{itemize}[leftmargin=*]
\item Upon analysing existing approaches that exploit hierarchical structures of road networks for efficient distance querying, we introduce a new hierarchy called \emph{balanced tree hierarchy} which enables us to reduce the size of a distance labelling significantly. Further, our balanced tree hierarchy allows us to answer a distance query by processing a small subset of labels that are necessary to compute the distance. Compared to the state-of-the-art approaches, our method achieves much faster query times.

\item We develop an efficient algorithm to construct our proposed \emph{balanced tree hierarchy}. Our algorithm recursively partitions a road network while preserving two conditions: balanced partitions and small cuts. Specifically, it first finds a balanced and small vertex cut and then creates a minimum number of shortcuts to preserve the distance between border vertices. This results in a tree structured hierarchy among vertices, supported by an efficient data structure for finding hub vertices of any two vertices in a road network.

\item We propose a distance labelling called \emph{hierarchical cut 2-hop labelling} (HC2L) which is hierarchical in terms of a vertex quasi-order defined by a balanced tree hierarchy. The labels of any two given vertices must contain a common hub vertex that can be found via their lowest common ancestor in the balanced tree hierarchy. We analyse upper and lower bounds on labelling size of HC2L and devise a novel pruning strategy, called \emph{tail pruning}, that allows our proposed algorithm to produce a HC2L with smaller labelling sizes without compromising query efficiency.

\item We conduct extensive experiments to evaluate the performance of the proposed method on 10 real-world large road networks including the whole road network of the USA. The experimental results demonstrate that our method is 1.5-4 times faster than the state-of-the-art approaches in terms of query processing while consuming comparable labelling construction time and up to 60\% smaller labelling size.
    
\end{itemize}
\section{Preliminaries}
Let $G=(V,E)$ be a road network where $V$ is a set of vertices, and $E$ is a set of edges. Each edge $(u, v) \in E$ is associated with a positive weight $\omega(u,v) \in \mathbb{R}$. Given a set of vertices $S\subseteq V$, $G[S]=(S,\{(v,v')\in E|v,v'\in S\})$ is an induced subgraph of $G$ formed from $S$. A path is a sequence of vertices $p = (v_1, v_2, \dots, v_k)$ where $(v_{i}, v_{i+1}) \in E$ for each $1 \leq i < k$. The weight of a path $p$ is defined as $\omega(p) = \sum_{i=1}^{k - 1} w(v_i, v_{i+1})$. For two arbitrary vertices $s$ and $t$, a shortest path $p$ between $s$ and $t$ is a path starting at $s$ and ending at $t$ such that $\omega(p)$ is minimised. The distance between $s$ and $t$ in $G$, denoted as $d_G(s,t)$, is the weight of any shortest path between $s$ and $t$. We use $N(v)$ to denote the set of direct neighbors of a vertex $v \in V$, i.e. $N(v) = \{u \in V \mid (u, v) \in E \}$, and $V(G)$ and $E(G)$ to refer to the set of vertices and edges in $G$, respectively. Each vertex $v \in V$ is associated with a {label} $L(v)$. The set of labels $L = \{L(v) \ | \ v \in V\}$ is called a \emph{distance labelling} over $G$. A \emph{vertex cut} $\vcut \subseteq V$ on $G$ is a subset of vertices whose removal from $G$ splits $G$ into multiple connected components. Vertices in $\vcut$ are called \emph{cut vertices}.

The distance query problem on road networks is defined below.
\begin{definition}[Problem Definition]
Given a road network $G=(V, E)$, the \emph{distance query problem} on $G$ is to compute the distance $d_G(s, t)$ between any two arbitrary vertices $s, t \in V$. 
\end{definition}
In this work, we study labelling-based techniques to efficiently answer distance queries on road networks.

\section{Existing Solutions}\label{sec:existing-solutions}
The currently fastest known solutions for the shortest-path distance problem on a road network compute a {distance labelling} $L$. Such a distance labelling usually satisfies a \emph{2-hop cover property} \cite{cohen2003reachability}, requiring that the labels of any two vertices  must contain at least one common vertex on their shortest-paths, hence called \emph{2-hop labelling}. In the literature, there are three popular labelling techniques that exploit hierarchical structures of road networks for computing 2-hop labellings: 
1) hub-based labellings \cite{abraham2011hub,abraham2012hierarchical}, 2) highway-based labellings \cite{akiba2014fast}, and 3) tree-decomposition labellings \cite{ouyang2018hierarchy,chen2021p2h}. We now discuss them in detail.

\begin{figure}[t]
\includegraphics[scale=0.7]{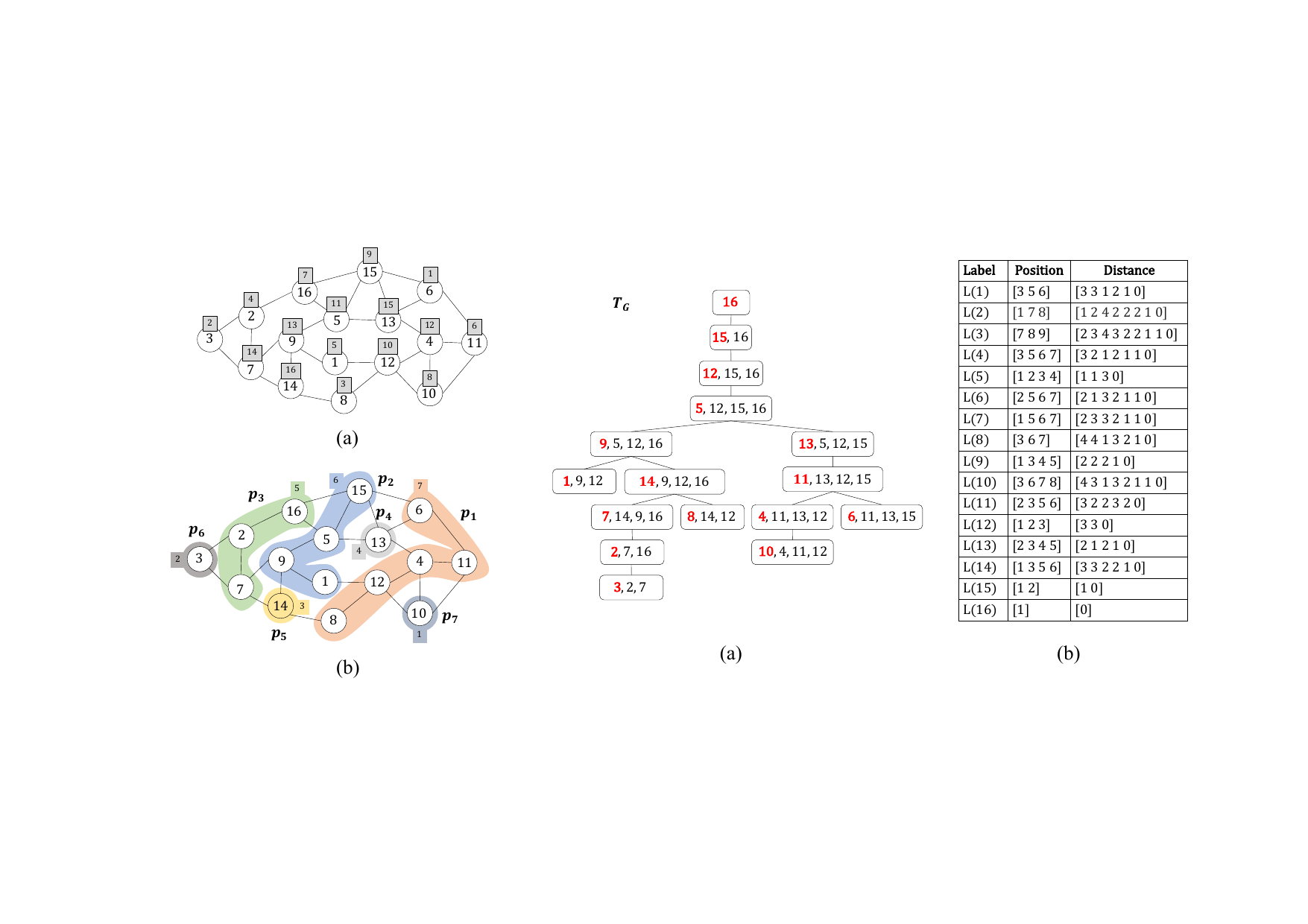}
\begin{center}
(a)    
\end{center}
\includegraphics[scale=0.7]{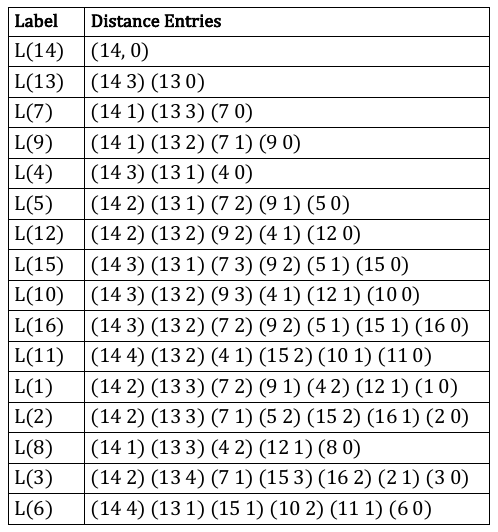}
\begin{center}
(b)
\end{center}
\caption{(a) An example road network, (b) Hub labelling.}
\label{fig:hl_labelling}

\end{figure}


\subsection{Hub-Based Labellings}\label{subset:hub-based labelling}
Abraham et al.~\cite{abraham2011hub} proposed the first hub-based labelling algorithm, called \emph{{H}ub-based {L}abeling} (HL), to construct labels by storing the distances from each vertex to hub vertices (i.e., vertices on shortest paths). The label of each vertex $v$ is thus a set of {distance entries} $\{(u_1, \delta_{vu_1}), \dots, (u_k, \delta_{vu_k})\}$ where $\{u_1,\dots,u_k\}\subseteq V$ and $\delta_{vu_i}=d_G(v,u_i)$. 
Their work was motivated by the observation that vertices visited by searches of hierarchical and reach-based algorithms \cite{sanders2005highway,gutman2004reach} form a 2-hop labelling. It turns out that HL produces 2-hop labellings that are much smaller than the worst-case bounds~\cite{gavoille2004distance}.
Hub-based labellings are not necessarily hierarchical.
Later, Abraham et al.~\cite{abraham2012hierarchical} studied hierarchical hub labellings with respect to a hierarchy defined by the relationship ``vertex $v$ is in the label of vertex $u$". For any hierarchical hub labelling $L$, there exists a \emph{canonical labelling}, which is the smallest hierarchical hub labelling with respect to all total vertex orderings that are consistent with the hierarchy of $L$.

Generally, given any two vertices $s$ and $t$, hub-based labelling methods compute the distance $d_G(s,t)$ by searching the labels $L(s)$ and $L(t)$ as follows:
\begin{equation}\label{equ:2-hop}    
d_G(s,t)=min\{\delta_{su}+\delta_{tu}:(u,\delta_{su})\in L(s),(u,\delta_{tu})\in L(t)\}.
\end{equation}


\begin{example}
Figure~\ref{fig:hl_labelling}(b) shows the canonical labelling for the road network shown in Figure \ref{fig:hl_labelling}(a) which respects a total vertex ordering
$14{>}13{>}7{>}9{>}4{>}5{>}12{>}15{>}10{>}16{>}11{>}1{>}2{>}8{>}3{>}6$.
The order of each vertex is shown in a box at the top in Figure \ref{fig:hl_labelling}(a). 
We can see from Table~\ref{fig:hl_labelling} that the labels of any two vertices contain the distance to the highest ranked vertex on their shortest-paths. Consider two vertices $3$ and $11$, they have one shortest-path $(3, 2, 16, 15, 6, 11)$, among which vertex $15$ is of the highest order and thus stored in both labels $L(3)$ and $L(5)$.
\end{example}


\subsection{Highway-Based Labellings}
Akiba et al.~\cite{akiba2014fast} proposed a highway-based labelling algorithm, namely  \emph{Pruned Highway Labelling} (PHL), which generalizes hub-based labellings by explicitly exploiting \emph{highway} structure in road networks. Highways are considered as shortest-paths which are central, i.e., passed through by many other shortest-paths.
Their algorithm first decomposes a road network into a set of disjoint shortest-paths $P$ and then computes a label $L(v)$ for each vertex $v$ which stores the distance from $v$ to the shortest-paths in $P$. Each label $L(v)$ is a set of triples $\{(p_i, \delta_{u_iu_j},\delta_{vu_j})\}_{p_i\in P}$ where $p_i\in P$ referring to a shortest-path, $\delta_{u_iu_j}=d_G(u_i,u_j)$ referring to the distance from the starting vertex $u_i$ to another vertex $u_j$ on the shortest path $p_i$, and $\delta_{vu_j}=d_G(v,u_j)$. 
To reduce labelling size, a pruned Dijkstra’s
search is conduced from each shortest-path to construct labels, similar to pruned landmark labelling~\cite{akiba2013fast}.

Given any two vertices $s$ and $t$, the distance $d_G(s,t)$ is computed by searching the labels $L(s)$ and $L(t)$ such that
\begin{align}
d_G(s,t)=&min\{\delta_{su_j}+\delta_{tu_j'}+|\delta_{u_iu_j}-\delta_{u_iu_j'}|: \\\nonumber
& (p_i,\delta_{u_iu_j},\delta_{su_j})\in L(s), (p_i, \delta_{u_iu_j'},\delta_{tu_j'})\in L(t), p_i\in P\}.
\end{align}

\begin{figure}[h]
\includegraphics[scale=0.7]{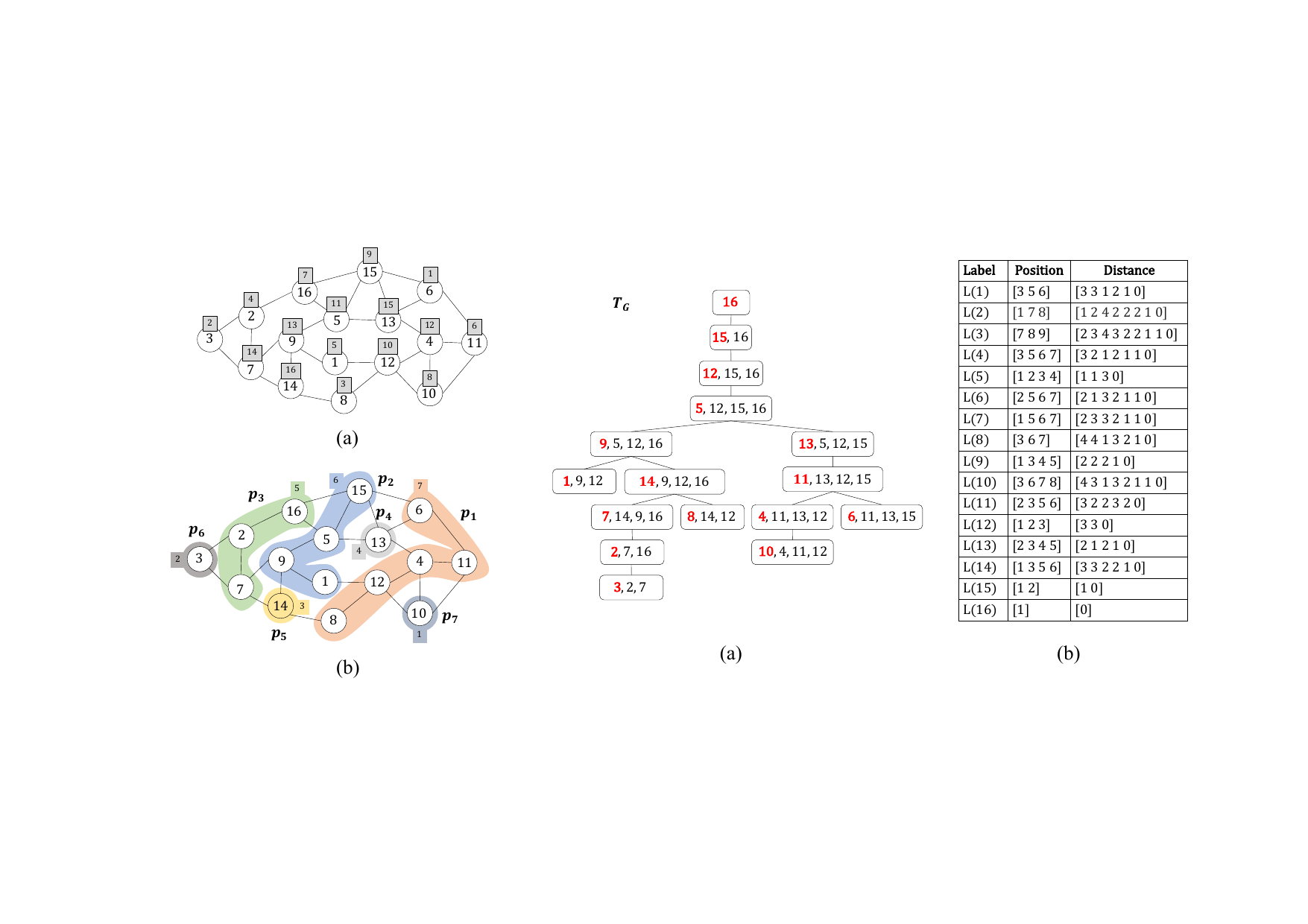}
\begin{center}
(a)
\end{center}
\includegraphics[scale=0.7]{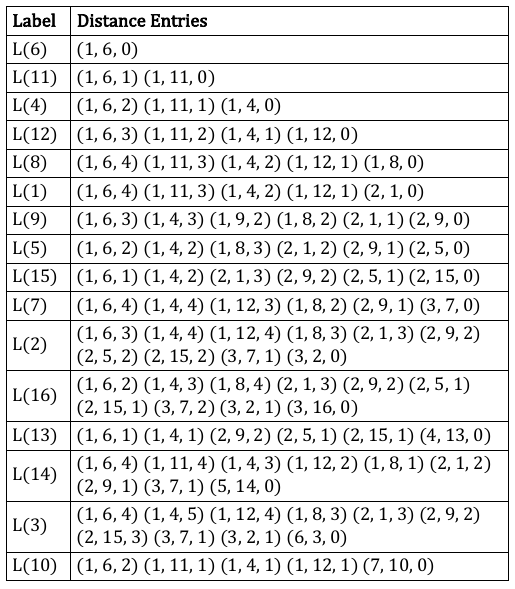}
\begin{center}
(b)
\end{center}
\caption{(a) Highway decomposition into shortest-paths, (b) Pruned highway labelling.}
\label{fig:phl_labelling}\vspace{-0.1cm}
\end{figure}
\begin{example}
Figure \ref{fig:phl_labelling}(a) shows the highway decomposition of the road network into shortest-paths $P=\langle p_1, p_2, p_3, p_4, p_5, p_6, p_7\rangle$ which are highlighted in different colors. $P$ defines a partial vertex ordering as $p_1>p_2>p_3>p_4>p_5>p_6>p_7$. The order of vertices in each path is shown in a box beside each vertex. Table~\ref{fig:phl_labelling}(b) shows the labelling constructed by PHL using the order implied by $P$. From Figures \ref{fig:phl_labelling}(a) and Table~\ref{fig:phl_labelling}(b), we can see that the label of each vertex only stores the distances to vertices in the paths with the same or higher orders. Consider two vertices $4$ and $8$ in $p_1$, the labels $L(4)$ and $L(8)$ only store the triples from vertices in $p_1$. 
\end{example}





\subsection{Tree-Decomposition Labellings} 
Recently, Ouyang et al.~\cite{ouyang2018hierarchy} proposed a tree-decomposition labelling method, called \emph{Hierarchical 2-Hop Index} (H2H), to exploit \emph{tree decomposition} structures in road networks. The general idea is to not only construct a 2-hop labelling but also generate a hierarchy among all vertices in a road network using tree decomposition techniques~\cite{DBLP:journals/actaC/Bodlaender93}. Based on such a hierarchy, a subset of vertices in labels are visited when answering distance queries. In a later work \cite{chen2021p2h}, pruning techniques were proposed to further reduce the number of vertices in labels being visited at query time.

One important property leveraged by tree-decomposition labelling methods is that, for any graph $G$ and any tree decomposition $T_G$, each internal node of $T_G$ represents a vertex cut on $G$. This allows to leverage the {lowest common ancestor} $LCA(s, t)$ of any two vertices $s$ and $t$ in $T_G$ to find a vertex cut that separates $s$ and $t$. Accordingly, a 2-hop labelling is constructed in H2H such that the label $L(v)$ of each vertex $v\in V(G)$ consists of two arrays: a \emph{distance array} $(\delta_{vw_1},\dots, \delta_{vw_k})$ where $\delta_{vw_i}=d_G(v,w_i)$ and $\{w_1,\dots, w_k\}$ is the set of vertices that are ancestors of $v$ in $T_G$, and a \emph{position array} $(p_1,\dots, p_k)$ that stores positions to $\{w_1,\dots, w_k\}$ in $T_G$. Let $L(v).dist$ and $L(v).post$ denote the distance and position arrays in $L(v)$, respectively.  The distance between any two given vertices $s,t\in V(G)$ is computed as
\begin{align}\label{lab:h2h_query}
d_G(s,t)=&min\{\delta_{su_i}+\delta_{tu_i}: \delta_{su_i}=L(s).dist(i), \\\nonumber
&\delta_{tu_i}=L(t).dist(i), i\in L(w).post, w= LCA(s, t)\}.
\end{align}

\begin{figure}[h]
\centering
\includegraphics[width=0.48\textwidth]{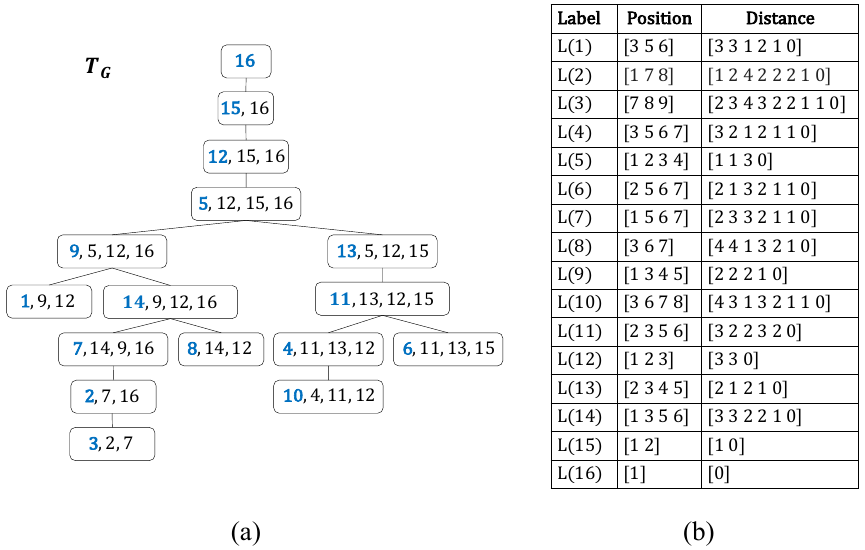}
\caption{Tree decomposition labelling: (a) a tree decomposition $T_G$ of the example road network $G$ shown in Figure \ref{fig:hl_labelling}(a), (b) H2H-Index.}
\label{fig:h2h_hierarchy}
\end{figure}

\begin{example}
Figure \ref{fig:h2h_hierarchy}(a) illustrates a tree decomposition $T_G$ of the road network shown in Figure \ref{fig:hl_labelling}(a) while Figure~\ref{fig:h2h_hierarchy}(b) shows the H2H index. The label of each vertex stores a position array and a distance array. Consider two vertices $7$ and $13$, the label $L(7)$ stores a position array $[1, 5, 7]$ which represents the positions (depths) of the ancestors $\{14, 9, 16\}$ inside the node associated with the vertex $7$ in $T_G$ shown in Figure \ref{fig:h2h_hierarchy}(a), and a distance array $[2, 3, 3, 2, 1, 1, 0]$ which represents the distances from $7$ to all the ancestors of vertex $7$ in $T_G$. For a distance query between vertices $7$ and $13$, $LCA(7, 13)=5$ is first obtained, and then using the distances in $L(7)$ and $L(13)$ at positions $[1, 2, 3]\in L(5)$, $d_G(7, 13) = 3$ is obtained according to Equation~\ref{lab:h2h_query}.
\end{example}

\subsection{Discussion}

Existing hub-based and highway-based labelling solutions search over all the distance entries in labels $L(s)$ and $L(t)$ for a query pair $(s, t)$. Thus, in order to accelerate querying, they exploit a vertex ordering to reduce size of their 2-hop labellings so that they can process smaller labels at query time. It is known that a vertex ordering is crucial for label construction. However, finding a 2-hop labelling of minimum size is NP-hard and finding a vertex ordering that minimizes labelling size is difficult~\cite{babenko2015complexity,akiba2013fast,cohen2003reachability}. 
Thus, hub-based and highway-based labelling solutions still suffer from scalability issues resulting in slow querying when a road network is large.


Existing tree-decomposition labelling solutions assume that a tree decomposition of a road network is given \cite{ouyang2018hierarchy,chen2021p2h}. However, obtaining a tree decomposition with the minimal tree width is a very difficult problem. Even for determining whether a graph $G$ has a tree width of at most a given value, it is known to be NP-complete \cite{arnborg1987complexity}. Sub-optimal algorithms exist \cite{bodlaender2006treewidth}, which can compute tree decomposition with a time complexity of $O(|V| \cdot (w^2$ + log$(|V|)))$ but may yield very large tree width $w$ and height $h$. Larger tree width and height take longer in constructing 2-hop labellings and produce larger labelling sizes ($|V| \cdot h$), thus resulting in slow querying. In addition to this, existing tree-decomposition labelling solutions use the Range Minimum Query (RMQ) based algorithms \cite{10.1007/10719839_9} to compute LCA for any two given vertices at query time. Although these algorithms can compute LCA in a tree in time $O(1)$, there are some hidden constant factors and additional computational requirements to achieve this. For instance, they need to precompute a data structure in order to store the information about LCA of all pairs of vertices in a tree. This data structure incurs significant computational overheads as shown in our experiments. 

The following example demonstrates the drawbacks of these existing solutions.

\begin{example}
Consider a distance query $(3, 10)$, HL in Figure \ref{fig:hl_labelling}(b) processes 7 distance entries in $L(3)$ and 6 distance entries in $L(10)$ to compute the distance $d_G(3,10)$. PHL in Figure \ref{fig:phl_labelling}(b) processes 10 distance entries in $L(3)$ and 5 distance entries in $L(10)$ to compute the distance $d_G(3,10)$. However, it is unnecessary to process all of these distance entries because only a small subset of distance entries in labels $L
(3)$ and $L(10)$ is sufficient for computing $d_G(3,10)$. For instance, only distance entries $(14, 2) \in L(3)$ and $(14, 3) \in L(10)$ in the hub labelling and distance entries $(1, 12, 4)\in L(3)$ and $(1, 12, 1)\in L(10)$ in the highway labelling are needed when computing the distance. 

In Table \ref{fig:h2h_hierarchy}(b), H2H stores distances to all ancestors of each vertex along with positions of the ancestors appearing in their associated nodes in $T_G$. This results in a large labelling size because vertices may appear in multiple nodes in $T_G$. Further, to compute LCA in constant time, this would require an extra space overhead. For instance, this extra space overhead is 4.64 GB for the whole USA (USA) dataset and 3.69 GB for Western Europe (EUR) dataset as shown in Table \ref{table:lca_hopsize}. 
\end{example}

\section{Our Solution}
In this section, we present a novel labelling-based solution, referred to as \underline{H}ierarchical \underline{C}ut \underline{2}-Hop \underline{L}abelling (HC2L) framework, for shortest-path distance queries on road networks. Let $G$ be a road network. Our solution is to build an efficient \emph{distance scheme} $S=(H_G, L_G, Q)$ over $G$, where $H_G$ is a vertex hierarchy on $G$, $L_G$ is a distance labelling on $G$ satisfying the 2-hop cover property, and $Q$ is a query function which, given any two vertices $s,t\in V(G)$, computes $d_G(s,t)$ based on their labels in $L$ and their hierarchical positions in $H_G$. 
\subsection{Hierarchy Construction}



To design an efficient distance scheme, the choice of a hierarchy is crucial. We observe that binary tree structure may serve as an efficient indexing scheme for road networks. This is because vertices can be indexed in a way that the LCA of any two indexed vertices contains at least one hub vertex on their shortest paths, enabling us to efficiently find them. It is also desirable to keep the height of such a binary tree small, thereby being as balanced as possible. Below, we define \emph{balanced tree hierarchy}.




\begin{definition}[Balanced Tree Hierarchy]\label{def:td}
Let $\beta$ be a balancing-parameter with $0<\beta\leq 0.5$. A \emph{balanced tree hierarchy} is a binary tree $H_G=(\mathcal{N}, \mathcal{E}, \ell)$, where $\mathcal{N}$ is a set of tree nodes, $\mathcal{E}$ is a set of tree edges, and $\ell: V(G)\rightarrow \mathcal{N}$ is a total surjective function. Further, $H_G$ must satisfy the following conditions:
\begin{enumerate}
\item For any internal tree node $N_i\in\mathcal{N}$, its subtrees are balanced:
\begin{equation*}
|\textsc{Left}(N_i)|, |\textsc{Right}(N_i)| \leq (1-\beta) \cdot |\textsc{Subtree}(N_i)|
\end{equation*}
where $\textsc{Right}(\cdot)$ and $\textsc{Left}(\cdot)$ refer to the vertices mapped into the right and left subtrees of a tree node, respectively,
and $\textsc{Subtree}(\cdot)$ to vertices mapped into the whole subtree.
\item For any two vertices $s,t\in V(G)$, the lowest common ancestor of their tree nodes $LCA (s, t)$ contains at least one vertex on a shortest-path between $s$ and $t$.
\end{enumerate}
\end{definition}


Thus, for each tree node at the $k$-th level of $H_G$, its subtree has at most $V\cdot(1-\beta)^k$ vertices. This leads to the following lemma.

\begin{lemma}
Let $\alpha=1/(1-\beta)$. The height of $H_G$ is bounded by $\log_\alpha(n)$, where $n=|V(G)|$.
\end{lemma}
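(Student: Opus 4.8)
The plan is to prove the height bound by tracking how the number of vertices in a subtree shrinks as we descend the tree. The key observation, already essentially stated in the sentence preceding the lemma, is that condition (1) of Definition~\ref{def:td} forces a geometric decay: if a tree node $N_i$ at level $k$ has subtree containing $m$ vertices, then each of its children has a subtree containing at most $(1-\beta)\cdot m$ vertices. Iterating from the root (level $0$), which by surjectivity of $\ell$ has all $n=|V(G)|$ vertices mapped into its subtree, a node at level $k$ has at most $n\cdot(1-\beta)^k$ vertices in its subtree.

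First I would make precise what ``height'' means here and set up the induction: let $r(N)$ denote $|\textsc{Subtree}(N)|$, and show by induction on $k$ that any node $N$ at depth $k$ satisfies $r(N)\le n\cdot(1-\beta)^k$. The base case $k=0$ is immediate. For the inductive step, if $N$ is a child of $N'$ at depth $k-1$, then $N$ is either $\textsc{Left}(N')$ or $\textsc{Right}(N')$ in the sense of the vertex sets, so by condition (1), $r(N)\le (1-\beta)\cdot r(N')\le (1-\beta)\cdot n\cdot(1-\beta)^{k-1} = n\cdot(1-\beta)^k$. Next, I would observe that every node $N$ has $r(N)\ge 1$: this follows because $\ell$ is total and surjective, so every tree node is the image of at least one vertex, hence lies in its own subtree's vertex set, giving $r(N)\ge 1$. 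Combining, a node at depth $k$ satisfies $1\le n\cdot(1-\beta)^k$, i.e. $(1-\beta)^{-k}\le n$, i.e. $\alpha^k\le n$ with $\alpha=1/(1-\beta)>1$, so $k\le\log_\alpha(n)$. Since the height is the maximum depth of any node, the height is at most $\log_\alpha(n)$.

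The one genuinely delicate point — and the main obstacle, modest as it is — is justifying the step $r(N)\le(1-\beta)\cdot r(N')$ for a child $N$ of $N'$. Condition (1) bounds $|\textsc{Left}(N')|$ and $|\textsc{Right}(N')|$ in terms of $|\textsc{Subtree}(N')|$, so I need that the vertex set mapped into the subtree rooted at child $N$ equals $\textsc{Left}(N')$ or $\textsc{Right}(N')$, which is just the definition of those operators; and I need that $\textsc{Subtree}(N)$ (the vertices mapped into $N$'s subtree) coincides with whichever of $\textsc{Left}(N'),\textsc{Right}(N')$ corresponds to $N$. This is immediate from the definitions once one unwinds the notation, so no real difficulty arises. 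A secondary subtlety is whether the bound should be $\lceil\log_\alpha(n)\rceil$ or an off-by-one variant; I would simply state the bound as $\log_\alpha(n)$ up to rounding, matching the lemma as phrased, and note that since depths are integers the bound $k\le\log_\alpha(n)$ already gives $k\le\lfloor\log_\alpha n\rfloor$ if one wants an integer statement. I do not expect to need condition (2) of Definition~\ref{def:td} at all for this lemma — it is only the balance condition that controls the height.
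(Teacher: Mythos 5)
Your proof is correct and follows exactly the argument the paper itself uses: the sentence preceding the lemma notes that a node at level $k$ has at most $|V|\cdot(1-\beta)^k$ vertices in its subtree, and the lemma follows since every node contains at least one vertex. You have simply filled in the routine induction and the surjectivity observation that the paper leaves implicit.
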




In the following we introduce our algorithm for constructing a balanced tree hierarchy $H_G$. At its core, the algorithm recursively computes a balanced cut to partition a road network into smaller components, collectively named \emph{hierarchical balanced cuts}.
\begin{updated}
Note that this is closely related to the minimum balanced vertex separator problem, which is known to be NP-hard \cite{DBLP:conf/stoc/FeigeM06}.
\end{updated}

\subsubsection{Hierarchical Balanced Cuts}
Let $G$ be a road network. The algorithm recursively bisects a graph $G$ in two steps: (1) \emph{Balanced partitioning}: partition an input graph $G'\subseteq G$ into two initial partitions connected via another partition referred to as a \emph{cut region}; (2) \emph{Minimal vertex cuts}: find a minimal vertex cut within the cut region. Each iteration of the algorithm splits $G'$ into two smaller but balanced components that contain the initial partitions, while a minimal vertex cut within the cut region ensures that vertices in such a cut are central in $G'$, i.e., passed through by many shortest-paths between vertices. This is illustrated in Figure~\ref{fig:partition_tree_labelling}(a). In the following, we discuss these steps in detail.

\vspace{0.1cm}
\noindent\textbf{Balanced Partitioning.~} Algorithm~\ref{algo:rough-partition} provides the pseudo-code for this approach. We start with two vertices $v_A$ and $v_B$ as far apart as possible (Lines 11-12), and assign to each vertex $v$ a \emph{partition weight} (Line 13):
\[
pw(v) := \dist_{G'}(v_A,v) - \dist_{G'}(v_B,v).
\]
Partition weights split vertices into equivalence classes. 
\begin{definition}[Equivalence Class]
Given $v\in V(G')$, the equivalence class of $v$ in $G'$ is $\{v' \in V(G') \mid pw(v) = pw(v')\}$.
\end{definition}
Then, two initial partitions $P'_A$ and $P'_B$ are created by picking vertices with lowest and highest partition weights, respectively, until the desired balancing condition (e.g. $\beta=0.3$) is satisfied (Lines 14-15).
The remaining vertices form a region, called a \emph{cut region}.

However, there is a complication that needs to be addressed. 
Suppose that we have $v_A=2$ and $v_B=3$ in the partition $P_A$ shown in Figure \ref{fig:partition_tree_labelling}(a), the vertex $7$ constitutes a bottleneck that causes all vertices whose shortest-paths to $2$ and $3$ pass through it to have the same equivalence class because of same partition weight as $7$. As a result, nodes from this equivalence class are added to $P'_A$ and $P'_B$ arbitrarily, leading to large cuts. To address this issue, we detect and remove the bottleneck from the graph temporarily and repeat the process for finding a rough partition on the remaining graph (Lines 16-21). The bottleneck itself is then also added to the cut region (Line 22). As this bottleneck removal can lead to the graph being disconnected, we first compute the connected components of $G$ (Line 3). If the largest component $C_{\max}$ contains no more than $(1-\beta)\cdot|V|$ nodes, the empty cut is already balanced (Lines 9-10). Otherwise we find our initial partition within $C_{\max}$ (Lines 5-7).

\begin{updated}
It is worth noting that the choice of initial vertices only determines the initial partition and the real optimization happens during the vertex cut step. Picking distant vertices is mainly important to ensure that initial partitions are well separated, thus leaving enough room for a vertex cut to avoid dense clusters.
We ran additional experiments (not reported) where we made multiple random choices for the arbitrary node in line~\ref{ln:rp-arbitrary}, then computed a partition for each choice, and picked the one with the smallest cut. This resulted in only minor reductions to labelling size (less than 5\% for the graphs examined), and we concluded that further optimization of our initial vertex choices (beyond being far apart) would be unlikely to justify the resulting increase in construction time.
\end{updated}

\begin{algorithm}[t]
\caption{Balanced Partition}\label{algo:rough-partition}
\SetCommentSty{textit}
\SetKwFunction{FMain}{BalancedPartition}
\SetKwProg{Fn}{Function}{}{end}
\Fn{\FMain{$G$, $\beta$}}{
	\If{$G$ is disconnected} {
	    $CC\gets$ connected components of $G$\\
		$C_{\max}\gets$ largest connected component\\
		\If{$|C_{\max}| > (1-\beta)\cdot |V|$}{
			$(P'_A,C,P'_B)\gets\FMain(C_{\max},\beta)$ \\
			\Return $(P'_A,\; C\cup(V\setminus C_{\max}),\; P'_B)$ \label{ln:rp-return-1}
		}
		\Else {
            $C_{2nd}\gets$ 2nd largest connected component\\
            \Return $(C_{\max},\; V\setminus(C_{\max}\cup C_{2nd}),\; C_{2nd})$ \label{ln:rp-return-2}
		}
	}
    \tcp{find distant nodes using BFS or Dijkstra}
    $v_A\gets$ node furthest away from arbitrary node \label{ln:rp-arbitrary}\\
    $v_B\gets$ node furthest away from A\\
    \tcp{find balanced partitions}
    sort nodes by $pw(v)=d_G(v_A,v) - d_G(v_B,v)$\\
    $P'_A\gets$ first $\beta\cdot|V|$ nodes\\
    $P'_B\gets$ last $\beta\cdot|V|$ nodes\\
	\tcp{handle bottlenecks}
	$w_A\gets\max\{pw(v) \mid v\in P'_A\}$\\
	$w_B\gets\min\{pw(v) \mid v\in P'_B\}$\\
	\If{$w_A = w_B$}{ \label{ln:rp-bottleneck}
		$EQ\gets \{ v\in V \mid pw(v) = w_A\}$\\
		$\mathcal{B}\gets \arg\min_{v\in EQ} d_G(v_A,v)$\\
		$(P'_A,C,P'_B)\gets\FMain(G\setminus\mathcal{B},\beta)$\\
		\Return $(P'_A,\; C\cup\mathcal{B},\; P'_B)$ \label{ln:rp-return-3}
	}
	\tcp{ensure $P'_A,P'_B$ are connected}
	$P'_A\gets \{ v\in V \mid pw(v) \leq w_A \}$ \label{ln:rp-connected-a}\\
	$P'_B\gets \{ v\in V \mid pw(v) \geq w_B \}$ \label{ln:rp-connected-b}\\
    \Return $(P'_A,\; V\setminus(P'_A\cup P'_B),\; P'_B)$ \label{ln:rp-return-4}
}
\end{algorithm}

\vspace{0.1cm}
\noindent\textbf{Minimal Vertex Cuts.~} Algorithm~\ref{algo:partition} shows the pseudo-code for this approach. To find a minimal vertex cut within a cut region, we case this as a minimal ($S, T$)-vertex-cut problem by contracting vertices in the initial partitions $P'_A$ and $P'_B$ into single vertices $S$ and $T$, respectively (Lines 3-11).
Here $S$ is adjacent to a vertex $v$ in the cut region iff any vertex in $P'_A$ is, and similar for $T$. This allows us to reduce this problem to a maximum flow problem using the well-known graph transformation technique~\cite{Bondy1976}
and solve it using a variant of Dinitz's algorithm \cite{DBLP:conf/birthday/Dinitz06} (Line 12). 


The flow graph for the example graph in Figure \ref{fig:hl_labelling}(a), resulting from the transformation in \cite{Bondy1976}, is shown in Figure~\ref{fig:flow_graph}(b). \emph{Inner} edges (shown in red) connect the incoming and outgoing copies of a node, and have capacity one. \emph{Outer} edges connect copies of different nodes, and have infinite capacity. Since flow paths alternate between inner and outer edges, and each vertex is incident to only one inner edge, outer edges are effectively limited to capacity one as well. For unit capacity graphs, Dinitz's algorithm requires at most $O(\sqrt{|V|})$ phases \cite{DBLP:conf/birthday/Dinitz06}, and each phase runs in $O(|E|)$.
As each phase increases the flow by at least one, the number of phases is also bounded by $|V_{cut}|$. This results in a complexity bound of $O(|E|\cdot\min(\sqrt{|V|}, |V_{cut}|))$.

When extracting a minimal vertex cut from the maximal flow, we have two options: pick for each flow path the node closest to $S$ (amongst those that cannot be reached from $S$ in the residual graph), or the node closest to $T$ (amongst those that cannot reach $T$ in the residual graph). We evaluate both options and pick the more balanced one. For the flow graph in Figure~\ref{fig:flow_graph}(b), this means we pick $\{16,5,12\}$ over $\{15,13,12\}$. 
Once a vertex cut $V_{cut}$ has been found and removed from the graph, we assign connected components to either $P_A$ or $P_B$ while maximizing balance (Lines 14-15).


\begin{figure}[h]
\centering
\includegraphics[width=0.48\textwidth]{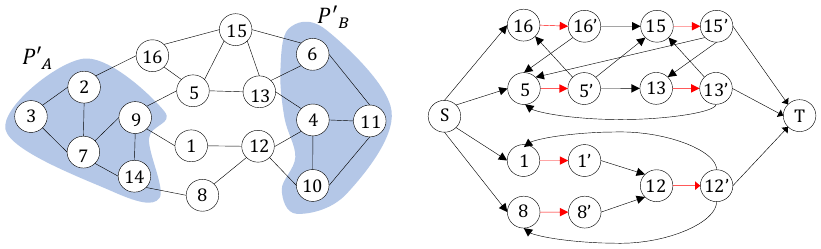}
\begin{center}
(a)\hspace{4cm}(b)
\end{center}
\caption{(a) Initial partitions, (b) Flow graph.}
\label{fig:flow_graph}
\end{figure}

However, if $S$ and $T$ end up with an edge between them, due to an edge between $a\in S$ and $b\in T$, there exists no vertex cut in the cut region.
This case occurs mostly at the lower levels of a tree hierarchy. To maintain balance guarantees, we move $a$ and $b$ to the cut region, and connect them to $S$ and $T$, respectively. This ensures that one of them becomes a cut vertex and the other remains in its original partition. 


\begin{lemma}\label{L:partition-complexity}
Algorithm~\ref{algo:rough-partition} runs in $O(|E|\cdot\log |V|\cdot k)$, where $k$ is the number of times the check in line~\ref{ln:rp-bottleneck} succeeds.
Algorithm \ref{algo:partition} runs in $O(|E|\cdot(\log |V|\cdot k + \min(|V_{cut}|,\sqrt{|V|})))$.
\end{lemma}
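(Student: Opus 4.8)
The plan is to analyze the running time of the two algorithms separately, tracking the cost of each primitive operation and how many times it is invoked. First I would deal with Algorithm~\ref{algo:rough-partition} (\textsc{BalancedPartition}). The dominant primitive is a single-source shortest-path (or BFS) computation, which costs $O(|E| + |V|\log|V|) = O(|E|\log|V|)$ on a connected road network where $|E| = \Theta(|V|)$; this is used a constant number of times per call to establish $v_A$, $v_B$ and the partition weights $pw(\cdot)$, and sorting the nodes by $pw$ adds another $O(|V|\log|V|)$. So one ``level'' of work costs $O(|E|\log|V|)$. The recursion in the disconnected case (line~\ref{ln:rp-return-1}) only recurses into $C_{\max}$, whose total size across the recursion is bounded by $|V|$, so this branch contributes a geometric sum dominated by $O(|E|\log|V|)$ overall and does not multiply the cost. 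The only source of repeated full-cost work is the bottleneck case: each time the test in line~\ref{ln:rp-bottleneck} succeeds, we remove $\mathcal{B}$ and recompute everything on $G\setminus\mathcal{B}$, incurring another $O(|E|\log|V|)$. Hence the total is $O(|E|\cdot\log|V|\cdot k)$ where $k$ counts the successful checks, which is exactly the claimed bound.

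Next I would handle Algorithm~\ref{algo:partition} (\textsc{Minimal Vertex Cut}). It begins by calling \textsc{BalancedPartition}, contributing the $O(|E|\cdot\log|V|\cdot k)$ term. The new ingredient is the max-flow computation on the transformed unit-capacity flow graph: by the discussion preceding the lemma, Dinitz's algorithm runs in $O(|E|\cdot\min(\sqrt{|V|},|V_{cut}|))$ on such graphs, since it needs at most $\min(\sqrt{|V|},|V_{cut}|)$ phases (the $\sqrt{|V|}$ bound for unit-capacity graphs, and the $|V_{cut}|$ bound because each phase raises the flow by at least one and the max flow equals $|V_{cut}|$) and each phase costs $O(|E|)$. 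The contraction of $P'_A, P'_B$ into $S, T$, building the transformed graph, extracting the cut from the residual graph (two reachability searches), and assigning connected components to $P_A$ or $P_B$ are all $O(|E|)$ and hence absorbed. Summing, we get $O(|E|\cdot(\log|V|\cdot k + \min(|V_{cut}|,\sqrt{|V|})))$ as claimed.

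The main obstacle — really the only delicate point — is justifying that the two recursion mechanisms do not silently blow up the bound. For the disconnected-graph recursion I need the observation that we only descend into $C_{\max}$ and attach the discarded components to the returned cut without further processing, so the sizes across that recursion telescope and the work is geometric rather than multiplicative; one should check the edge case where $C_{\max}$ itself is disconnected after a bottleneck removal is impossible by construction, or is handled by re-entering the same branch with strictly fewer vertices. For the bottleneck recursion, I must argue that its depth is captured by $k$ as defined (each level of that recursion is one successful line~\ref{ln:rp-bottleneck} check, removing at least one vertex, so it terminates), and that the bottleneck-removal and component-reassembly bookkeeping in lines following the recursion is only $O(|E|)$ per level. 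Everything else is routine accounting of shortest-path, sorting, and linear-scan costs, for which I would just cite the standard bounds rather than expand them.
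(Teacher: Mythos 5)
Your proposal is correct and matches the argument the paper intends: the paper gives no explicit proof of this lemma, but the surrounding text supplies exactly the ingredients you use (each round of Algorithm~\ref{algo:rough-partition} costs one or two Dijkstra searches plus a sort, i.e.\ $O(|E|\log|V|)$, repeated once per successful bottleneck check; Dinitz's algorithm on the unit-capacity flow graph needs $\min(\sqrt{|V|},|\vcut|)$ phases at $O(|E|)$ each, with all remaining bookkeeping linear). Your extra care about the disconnected-case recursion is a reasonable addition, though it is immediate since a largest connected component is itself connected, so that branch recurses at most one level before returning to the connected case.
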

In practice we found that $k\leq 1$ and $\log|V| \leq |V_{cut}| \leq \sqrt{|V|}$, except for very small subgraphs, resulting in effective time complexities of $O(|E|\cdot\log|V|)$ and $O(|E|\cdot|V_{cut}|)$.


\begin{algorithm}[t]
\caption{Balanced Cut}\label{algo:partition}
\SetCommentSty{textit}
\SetKwFunction{FMain}{BalancedCut}
\SetKwFunction{FRough}{BalancedPartition}
\SetKwProg{Fn}{Function}{}{end}
\Fn{\FMain{$G$, $\beta$}}{
    $(P'_A,C,P'_B)\gets\FRough(G,\beta)$ \\
    \tcp{find vertices in cross-partition edges}
    $C_A\gets P'_A\cap N(P'_B)$\\
    $C_B\gets P'_B\cap N(P'_A)$\\
    \tcp{construct s-t flow graph}
    $G_{ST}\gets G[C\cup C_A\cup C_B]\cup \{s,t\}$\\
    $N_S\gets C_A\cup \big(C\cap N(P'_A\setminus C_A)\big)$\\
    $N_T\gets C_B\cup \big(C\cap N(P'_B\setminus C_B)\big)$\\
    \ForEach{$v\in N_S$}{
        add edge $(s,v)$ to $G_{ST}$\\
    }
    \ForEach{$v\in N_T$}{
        add edge $(t,v)$ to $G_{ST}$\\
    }
    \tcp{find minimum cut (Dinitz's algorithm)}
    $\vcut\gets$ minimum s-t vertex cut on $G_{ST}$\\
    \tcp{construct final partitions}
    $CC\gets$ connected components of $G\setminus\vcut$\\
    \ForEach{$cc\in CC$ in order of decreasing size}{
        add $cc$ to smaller of $P_A$, $P_B$\\
    }
    \Return $(P_A,\vcut,P_B)$
}
\end{algorithm}

\subsubsection{Distance Preservation}
There is one issue with applying Algorithm~\ref{algo:partition}. Although it can partition a graph $G$ into two balanced components $P_A$ and $P_B$ via a vertex cut $\vcut$, the induced subgraphs of $G$ by $P_A$ and $P_B$ are not necessarily distance-preserving.


\begin{definition}[Distance-Preserving Property]
Let $G[P]$ denote an induced subgraph of $G$ by the vertices in a partition  $P$. We say that $P$ is \emph{distance-preserving} iff the following condition is satisfied:
\begin{equation}\label{equ:distance}
    \forall x,y\in P. \dist_{G[P]}(x,y)=\dist_G(x,y).
\end{equation}
\end{definition}

The example below illustrates the distance-preserving property.

\begin{example}\label{exa:shortcut}
Consider the road network in Figure \ref{fig:partition_tree_labelling}(a), partitioned into $P_A$ and $P_B$ by $V_{cut} = \{5, 12, 16\}$. $P_A$ is not distance-preserving since $\dist_{G[P_A]}(1,8)=3$  but $\dist_G(1,8)=2$. $P_B$ is distance-preserving because Equation~\ref{equ:distance} is satisfied.  
\end{example}

To characterise how the violation of this distance-preserving property by a partition relates to cut vertices, we introduce the notion of \emph{border vertex}.

\begin{definition}[Border Vertex]
Let $(P_A,P_B,V_{cut})$ be a balanced cut on a graph $G$ and $P\in\{P_A,P_B\}$. Then a vertex $v\in P$ is a \emph{border vertex} w.r.t. $V_{cut}$ iff there exists an edge $(v, w) \in E(G)$ such that $w \in V_{cut}$. 
\end{definition}
Let $\vborder(P)=\{v\in P \mid (v,w)\in E(G), w \in \vcut\}$ referring to the set of all border vertices in $P$ on a graph $G$. We have the following Lemma.
\begin{lemma}\label{lem:shortcut}
Given two vertices $x,y\in P$, the following statements are equivalent:
\begin{itemize}
\item[(1)] $\dist_{G[P]}(x,y)\neq\dist_G(x,y)$;
\item[(2)] All shortest paths between $x$ and $y$ must pass through at least one vertex in $\vcut$;
\item[(3)] All shortest paths between $x$ and $y$ must pass through at least two vertices in $\vborder(P)$.
\end{itemize}
\end{lemma}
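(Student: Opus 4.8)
The plan is to prove the cycle of implications $(1)\Rightarrow(2)\Rightarrow(3)\Rightarrow(1)$, relying on two standing facts. First, by construction $P_A$ and $P_B$ are each a union of connected components of $G\setminus\vcut$, so no edge of $G$ joins a vertex of $P_A$ to a vertex of $P_B$; hence any path of $G$ between two vertices of $P$ that avoids $\vcut$ stays entirely inside $P$, and is therefore a path of $G[P]$. Second, since all edge weights are positive, every shortest path is simple, and since $G[P]$ is a subgraph of $G$ we always have $\dist_{G[P]}(x,y)\ge\dist_G(x,y)$; thus statement (1) is equivalent to the strict inequality $\dist_{G[P]}(x,y)>\dist_G(x,y)$. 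Throughout I would assume $x\ne y$ and $\dist_G(x,y)<\infty$, treating the degenerate cases separately.

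For $(1)\Leftrightarrow(2)$ I would argue the contrapositive of each direction. If $\neg(2)$ holds, take a shortest path $p$ from $x$ to $y$ in $G$ that avoids $\vcut$; by the first standing fact $p$ is a path of $G[P]$, so $\dist_{G[P]}(x,y)\le\omega(p)=\dist_G(x,y)$, which is $\neg(1)$. Conversely, if $\neg(1)$ holds, any path of $G[P]$ of weight $\dist_G(x,y)$ is a shortest path of $G$ lying inside $P$, hence one that avoids $\vcut$, which is $\neg(2)$.

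For $(2)\Rightarrow(3)$ I would take an arbitrary shortest path $p=(v_0,\dots,v_k)$ with $v_0=x$ and $v_k=y$; by (2) some $v_i$ lies outside $P$. Let $a=\min\{i : v_i\notin P\}$ and $b=\max\{i : v_i\notin P\}$, so that $1\le a\le b\le k-1$. Since $v_{a-1}\in P$ is adjacent to $v_a\notin P$ and there is no edge between $P_A$ and $P_B$, the vertex $v_a$ must belong to $\vcut$, whence $v_{a-1}\in\vborder(P)$; symmetrically $v_b\in\vcut$ and $v_{b+1}\in\vborder(P)$. As $a-1<b+1$ and $p$ is simple, $v_{a-1}\ne v_{b+1}$, so $p$ meets two distinct border vertices; since $p$ was an arbitrary shortest path, (3) holds.

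The implication $(3)\Rightarrow(1)$ (equivalently $(3)\Rightarrow(2)$) is the step I expect to be the main obstacle. Arguing contrapositively, from $\neg(2)$ one obtains a shortest path from $x$ to $y$ that avoids $\vcut$, hence lies inside $P$, giving $\dist_{G[P]}(x,y)=\dist_G(x,y)$, that is, $\neg(1)$; but to contradict (3) one still needs to exhibit a shortest path meeting \emph{at most one} border vertex, whereas an in-partition shortest path may run along the boundary through several vertices of $\vborder(P)$. Closing this gap would call for a minimality argument --- among the shortest paths from $x$ to $y$ contained in $P$, take one meeting the fewest vertices of $\vborder(P)$ and argue it meets at most one --- together with care about subgraphs where the boundary unavoidably lies on every shortest path between $x$ and $y$. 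This last implication is where the real work of the proof is concentrated; the other three implications are short.
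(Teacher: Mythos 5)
Your treatment of $(1)\Leftrightarrow(2)$ and of $(2)\Rightarrow(3)$ is correct, and it rests on exactly the right structural fact: since $P_A$ and $P_B$ are unions of connected components of $G\setminus\vcut$, any edge leaving $P$ lands in $\vcut$, so a path between vertices of $P$ that avoids $\vcut$ stays in $G[P]$, and the first and last exits of a shortest path that does leave $P$ are flanked by two distinct border vertices. The paper itself states the lemma without proof, so there is nothing to compare against on that side; note that in the text the lemma is only ever invoked in the direction $(1)\Rightarrow(3)$, to justify that adding shortcuts between border vertices restores the distance-preserving property.

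The gap you flag at $(3)\Rightarrow(1)$ is, however, not closable: that implication is false as stated, and the minimality argument you sketch cannot succeed. Concretely, let $P=\{x,b_1,b_2,y\}$ with unit-weight edges $(x,b_1)$, $(b_1,b_2)$, $(b_2,y)$, let $\vcut=\{c\}$ with edges $(b_1,c)$ and $(b_2,c)$ of weight $10$, and let the other partition be a single vertex adjacent to $c$. Then $b_1,b_2\in\vborder(P)$, and the unique shortest path from $x$ to $y$ is $(x,b_1,b_2,y)$, which lies entirely in $G[P]$ yet passes through two border vertices; so $(3)$ holds while $\dist_{G[P]}(x,y)=\dist_G(x,y)$ and $(1)$ fails. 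Nothing prevents a shortest path confined to $P$ from running through arbitrarily many border vertices, so no choice of ``shortest path meeting the fewest border vertices'' can be forced down to at most one. The correct and provable statement --- and the only one the paper actually needs --- is $(1)\Leftrightarrow(2)$ together with $(2)\Rightarrow(3)$; your proposal establishes exactly that, and the remaining implication should be dropped (or the lemma restated) rather than pursued.
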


If $P$ is not distance-preserving, according to Lemma~\ref{lem:shortcut}, it would suffice to add \emph{shortcuts} between border vertices in $P$, leading to a distance-preserving subgraph as defined below.

\begin{definition}[Shortcut-enhanced subgraph] A \emph{shortcut-enhanced subgraph} with respect to a partition $P$, denoted as $G\langle P\rangle=(P, E^*\cup E^{\Delta})$, consists of 
\begin{itemize}
    \item the set of vertices in $P$;
    \item the set of edges $E^*=\{(v_i,v_j)\in E(G) \mid v_i,v_j\in P\}$;
    \item the set of shortcuts $E^{\Delta} = \{ (b_{i}, b_{j}) \mid \{b_{i}, b_{j}\}\in \vborder(P)\}$ with $\omega(b_{i},b_{j}) = d_{G}(b_{i},b_{j})$.
\end{itemize}
\end{definition}

\begin{example}
Consider Figure \ref{fig:partition_tree_labelling}(a), in which the partition $P_A$ is not distance-preserving. A shortcut set $E^{\Delta} = \{(1, 8)\}$ with $\omega(1, 8) = 2$ is added, which leads to a shortcut-enhanced subgraph $G\langle P_A\rangle$.
\end{example}

However, not all of the shortcuts in $E^{\Delta}$ are actually needed. The following Lemma characterise redundant shortcuts:

\begin{lemma}\label{lem:redundant}
Let $b_1, b_2\in\vborder(P)$. A shortcut between $b_1$ and $b_2$ is \emph{redundant} iff one of the following conditions is satisfied:
\begin{enumerate}
\item $\dist_{G[P]}(b_1, b_2) = \dist_G(b_1, b_2)$, or
\item $\dist_G(b_1, b_2) = \dist_G(b_1, b_3) + \dist_G(b_3, b_2)$ for some border vertex $b_3\in\vborder(P)/\{b_1, b_2\}$.
\end{enumerate}
\end{lemma}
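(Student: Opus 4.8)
The plan is to read ``the shortcut $(b_1,b_2)$ is redundant'' as ``deleting this one edge from $G\langle P\rangle$ still leaves a distance-preserving subgraph'', and to reduce the whole question to a statement about the single pair $\{b_1,b_2\}$. Write $G'$ for $G\langle P\rangle$ with the edge $(b_1,b_2)$ removed. Two facts will be used throughout. First, $\dist_{G'}(x,y)\geq\dist_G(x,y)$ for all $x,y\in P$, since every edge of $G'$ is either an original edge of $G$ or a shortcut of weight $\dist_G(b_i,b_j)$, hence expands into a walk of $G$ of equal weight; consequently redundancy is equivalent to $\dist_{G'}(x,y)\leq\dist_G(x,y)$ for all $x,y$. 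Second, $G\langle P\rangle$ itself is distance-preserving (this is exactly what the shortcut set $E^{\Delta}$ achieves, via Lemma~\ref{lem:shortcut}), so a shortest $x$--$y$ path in $G\langle P\rangle$ has weight $\dist_G(x,y)$.

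The first real step is a reduction: $(b_1,b_2)$ is redundant iff $\dist_{G'}(b_1,b_2)=\dist_G(b_1,b_2)$, i.e.\ iff $G'$ contains a $b_1$--$b_2$ path of weight $\dist_G(b_1,b_2)$. One direction is trivial. For the converse, given $x,y\in P$ I would take a shortest $x$--$y$ path $p$ in $G\langle P\rangle$ (weight $\dist_G(x,y)$ by the second fact) and replace the (at most one) use of the edge $(b_1,b_2)$ on $p$ by a shortest $b_1$--$b_2$ path of $G'$; this detour has weight $\dist_{G'}(b_1,b_2)=\dist_G(b_1,b_2)=\omega(b_1,b_2)$, exactly the weight of the removed edge, so the resulting $x$--$y$ walk lies in $G'$ and still has weight $\dist_G(x,y)$. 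The substitution terminates because a path of $G'$ never uses $(b_1,b_2)$.

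It then remains to show that $G'$ has a $b_1$--$b_2$ path of weight $\dist_G(b_1,b_2)$ exactly when condition~(1) or~(2) holds. For ``if'': condition~(1) gives such a path already inside $G[P]\subseteq G'$; condition~(2) gives the two-edge detour $b_1\!-\!b_3\!-\!b_2$ using the shortcuts $(b_1,b_3),(b_3,b_2)\in E^{\Delta}$, which survive in $G'$ precisely because $b_3\notin\{b_1,b_2\}$ makes them distinct from the deleted edge, and whose total weight is $\dist_G(b_1,b_3)+\dist_G(b_3,b_2)=\dist_G(b_1,b_2)$. For ``only if'': fix a \emph{simple} shortest $b_1$--$b_2$ path $p^{\ast}$ in $G'$ (positive weights guarantee one exists). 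If all its edges are original edges, then $p^{\ast}\subseteq G[P]$, so $\dist_{G[P]}(b_1,b_2)\le\dist_G(b_1,b_2)$ and condition~(1) holds. Otherwise $p^{\ast}$ traverses some shortcut $(b_i,b_j)$ with $\{b_i,b_j\}\ne\{b_1,b_2\}$; since both are two-element sets, one endpoint $b_3\in\{b_i,b_j\}$ lies outside $\{b_1,b_2\}$, so $b_3$ is a border vertex occurring strictly between $b_1$ and $b_2$ on the simple path $p^{\ast}$. Splitting $p^{\ast}$ at $b_3$ and using that subpaths of shortest paths are shortest (together with the first fact applied to each half) gives $\dist_G(b_1,b_2)=\dist_{G'}(b_1,b_3)+\dist_{G'}(b_3,b_2)\ge\dist_G(b_1,b_3)+\dist_G(b_3,b_2)\ge\dist_G(b_1,b_2)$, forcing equality throughout, which is condition~(2).

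The step I expect to be the crux is the reduction: one must be confident that patching the single pair $\{b_1,b_2\}$ patches \emph{every} pair, which rests entirely on the substituted detour having weight exactly $\omega(b_1,b_2)$ and on that detour not reintroducing the deleted edge. Everything after that is a careful case split whose only subtle point is that a shortcut actually traversed by $p^{\ast}$ must expose a border vertex \emph{different from} $b_1$ and $b_2$ in the interior of $p^{\ast}$; positivity of the edge weights (so that a simple shortest path exists) is what makes this clean.
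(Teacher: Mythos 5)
Your argument is correct, but note that the paper itself offers no proof of Lemma~\ref{lem:redundant} (nor a formal definition of \emph{redundant}), so there is nothing to match against; what you have done is supply both the missing definition and the missing proof. Your reading of redundancy --- deleting the single edge $(b_1,b_2)$ from $G\langle P\rangle$ leaves all distances in $P$ preserved --- is the natural one, and the reduction to the single pair $\{b_1,b_2\}$ (splice a shortest $b_1$--$b_2$ path of $G'$ in place of the deleted edge, which works because the detour has weight exactly $\omega(b_1,b_2)$ and cannot reintroduce the deleted edge) is sound, as is the case split on whether a simple shortest $b_1$--$b_2$ path in $G'$ uses a surviving shortcut. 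Two small points are worth making explicit. First, your ``Fact~2'' (that $G\langle P\rangle$ itself is distance-preserving) does not follow from Lemma~\ref{lem:shortcut} alone; it needs the short decomposition argument that any shortest $x$--$y$ path of $G$ leaving $P$ does so between two border vertices $b,b'$, so the excursion can be replaced by the shortcut $(b,b')$ of weight $d_G(b,b')$. Second, your lemma certifies only that each redundant shortcut is \emph{individually} removable, whereas Algorithm~\ref{algo:shortcuts} omits all of them simultaneously; justifying that stronger statement needs an extra induction (e.g.\ on $d_G(b_1,b_2)$, using that positivity of weights forces $d_G(b_1,b_3),d_G(b_3,b_2)<d_G(b_1,b_2)$ in condition~(2)), which neither you nor the paper supplies.
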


Algorithm~\ref{algo:shortcuts} shows the pseudo-code for adding shortcuts. We first compute $\dist_{cut}(b, b')$ in $P$ that is the minimal length of paths between two border vertices $b$ and $b'$ which pass through at least one cut vertex in $\vcut$ (Line 7).
The distance $d_G(b,b')$ between border vertices $b$ and $b'$ is the minimum of the lengths $d_{cut}(b,b')$ and $d_{G[P]}(b,b')$ (Line 8). After obtaining distances between all pairs of border vertices in $G[P]$ and $P$, we can check Conditions (i) and (ii) in Lemma \ref{lem:redundant} to eliminate redundant shortcuts. 

\begin{lemma}
The time complexity of Algorithm \ref{algo:shortcuts} for adding shortcuts between border vertices $B$ is
\[
O(|B|\cdot(|E|\cdot\log |V| + |B|\cdot(|B|+|V_{cut}|))).
\]
\end{lemma}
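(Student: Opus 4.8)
\medskip
\noindent\textbf{Proof plan.} I would bound the running time of Algorithm~\ref{algo:shortcuts} by splitting it into three phases and charging each separately: (a) the single-source shortest-path searches that tabulate, for every border vertex $b\in B$, the in-partition distances $\dist_{G[P]}(b,\cdot)$ together with the distances from $b$ to the cut vertices in $\vcut$; (b) the pairwise combination of these tables into $\dist_{cut}(b,b')$ and then into $\dist_G(b,b')$ for all $b,b'\in B$ (Lines~7--8); and (c) the redundancy test of Lemma~\ref{lem:redundant} applied to each of the $O(|B|^2)$ candidate shortcuts, followed by insertion of the survivors. Adding the three bounds and simplifying should yield the claimed $O(|B|\cdot(|E|\cdot\log|V| + |B|\cdot(|B|+|\vcut|)))$.

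For phase~(a): each $b\in B$ triggers a constant number of Dijkstra searches, each run on a subgraph with $O(|V|)$ vertices and $O(|E|)$ edges (namely $G[P]$ with its current shortcuts, respectively $G[P]$ extended by the incident cut vertices), and hence costs $O(|E|\log|V|)$ with a binary heap. Crucially, because the search that reaches the cut vertices includes them, it simultaneously records $\dist_G(b,c)$ for every $c\in\vcut$ at no extra asymptotic cost, which is exactly what Line~7 needs. Summed over the $|B|$ sources this contributes $O(|B|\cdot|E|\log|V|)$. For phase~(b): by Lemma~\ref{lem:shortcut}, a $b$--$b'$ path whose length is $\dist_{cut}(b,b')$ must touch $\vcut$, so $\dist_{cut}(b,b')=\min_{c\in\vcut}\bigl(\dist_G(b,c)+\dist_G(c,b')\bigr)$; every term on the right is already tabulated, so each of the $O(|B|^2)$ pairs costs $O(|\vcut|)$, for a total of $O(|B|^2\,|\vcut|)$, and the assignment $\dist_G(b,b')=\min(\dist_{cut}(b,b'),\dist_{G[P]}(b,b'))$ of Line~8 adds only $O(|B|^2)$.

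For phase~(c): testing Condition~(1) of Lemma~\ref{lem:redundant} is a single comparison per candidate pair, i.e.\ $O(|B|^2)$ overall, while Condition~(2) requires, for each of the $O(|B|^2)$ candidate shortcuts, a scan over all potential intermediate border vertices $b_3\in B$, i.e.\ $O(|B|^3)$; inserting the non-redundant shortcuts into $G\langle P\rangle$ is dominated by this. Summing phases~(a)--(c) gives $O\bigl(|B|\cdot|E|\log|V| + |B|^2|\vcut| + |B|^3 + |B|^2\bigr) = O\bigl(|B|\cdot(|E|\log|V| + |B|(|B|+|\vcut|))\bigr)$, as required. I expect the only step needing real care to be the bookkeeping in phases~(a)/(b): arguing that the per-source searches genuinely deliver the values $\dist_G(b,c)$, $c\in\vcut$, that Line~7 consumes (so that no separate, more expensive all-pairs computation is hidden), and invoking Lemma~\ref{lem:shortcut} to justify the $\min_{c\in\vcut}$ formula for $\dist_{cut}$; the rest is a routine summation.
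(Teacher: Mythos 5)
Your proof is correct and matches the intended argument: the paper states this lemma without an explicit proof, and the bound follows exactly as you describe by charging the $|B|$ Dijkstra searches (Line~4) at $O(|E|\log|V|)$ each, the pairwise $d_{cut}$ minimization over $\vcut$ (Line~7) at $O(|B|^2\cdot|\vcut|)$, and the triple loop for the redundancy test of Lemma~\ref{lem:redundant} at $O(|B|^3)$. The one place you deviate slightly is in how the values $d_G(b,c)$ for $c\in\vcut$ are obtained --- the algorithm assumes these are ``already known'' from the earlier cut-vertex label construction rather than recomputed here --- but this only makes the bound easier, so your accounting stands.
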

In practice the $|B|\cdot(|B|+|V_{cut}|)$ component is negligible, leaving us with an effective complexity of $O(|B|\cdot|E|\cdot\log |V|)$.

\begin{algorithm}[t]
\caption{Add Shortcuts}\label{algo:shortcuts}
\SetCommentSty{textit}
\SetKwFunction{FMain}{AddShortcuts}
\SetKwProg{Fn}{Function}{}{end}
\SetInd{0.5em}{0.5em}
\Fn{\FMain{$G$, $\vcut$, $P$}}{
    $\vborder\gets\{v\in G[P] \mid (v,w)\in E, w \in \vcut\}$ \\
    \ForEach{$b\in \vborder$}{
        $d_{G[P]} \gets$ Dijkstra's search in $G[P]$ starting from $b$ \\
        \ForEach{$b'\in \vborder\setminus\{b\}$}{
            store $d_{G[P]}(b,b')$ \\
            \tcp{distances to cut vertices already known}
            $d_{cut}(b, b')\gets\min\{ d_G(b,c) + d_G(c,b') \mid c\in\vcut \}$ \\
            $d_G(b,b')\gets min(d_{G[P]}(b,b'),\; d_{cut})$ \\
        }
    }
    \ForEach{$b_1,b_2\in \vborder$}{
        \If{$d_G(b_1,b_2) < d_{G[P]}(b_1,b_2)$}{
            redundant $\gets\false$ \\
            \ForEach{$b_3\in \vborder\setminus\{b_1,b_2\}$}{
                \If{$d_G(b_1,b_3) + d_G(b_3,b_2) = d_G(b_1,b_2)$}{
                    redundant $\gets\true$
                }
            }
            \If{not redundant}{
                add shortcut $(b_1,b_2,d_G(b_1,b_2))$
            }
        }
    }
}
\end{algorithm}

\begin{figure*}
\centering
\includegraphics[width=0.85\textwidth]{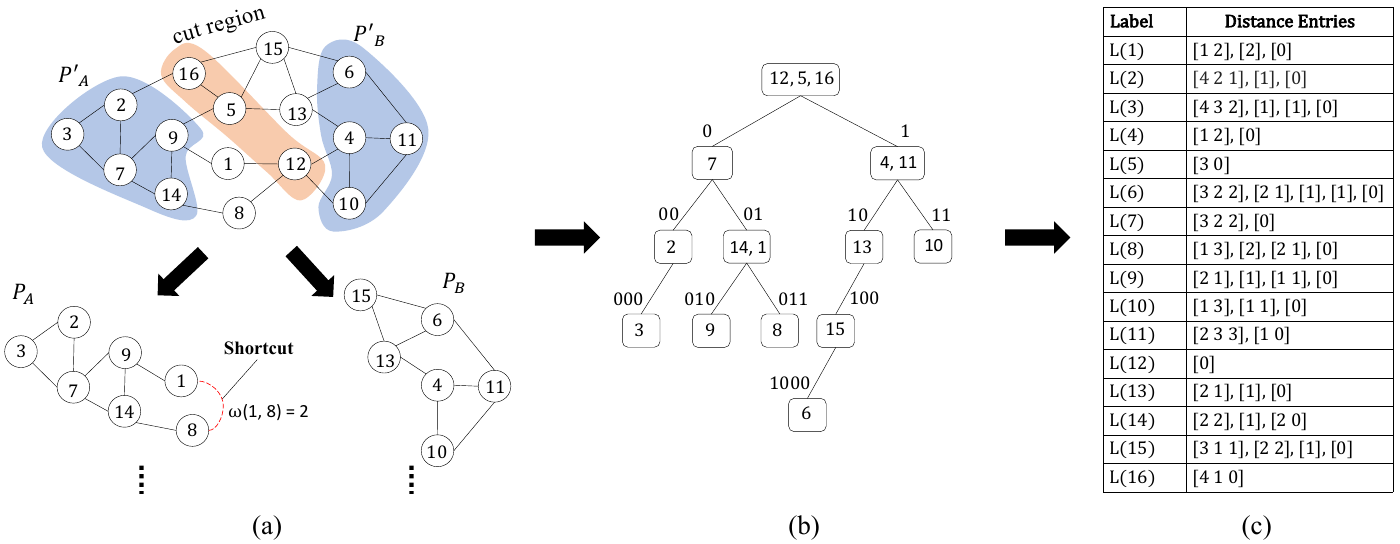}
\caption{An illustration of: (a) Hierarchical balanced cut, (b) Balanced tree hierarchy, and (c) Hierarchical cut 2-hop labelling, for the road network illustrated in Figure \ref{fig:hl_labelling}(a).}
\label{fig:partition_tree_labelling}
\end{figure*}



\subsection{Labelling Construction}
Now we present our algorithm for constructing a distance labelling, namely \emph{Hierarchical Cut 2-Hop Labelling} (HC2L). This distance labelling is said to be \emph{hierarchical} because it is constructed upon a vertex {quasi-order} defined by a balanced tree hierarchy $H_G$ over $G$.

\begin{definition}[Vertex {Quasi-Order}]
A balanced tree hierarchy $H_G$ defines a {\emph{vertex quasi-order}} $\preceq$ on $V(G)$ such that $v_i\preceq v_j$ iff $\ell(v_i)$ is an ancestor of $\ell(v_j)$ in $H_G$, {including $\ell(v_j)$ itself}.
\end{definition}

\begin{definition}[Hierarchical Cut 2-Hop Labelling] A distance labelling $L_G$ over $G$ is a \emph{hierarchical cut 2-hop labelling} (HC2L) w.r.t. $H_G$ if, 
\begin{enumerate}
\item for any label $L(v)$, $v\preceq u$ holds for any vertex $u$ in $L(v)$;
\item for any two vertices $\{s,t\}\subseteq V$
there exists $r\in LCA(s,t)$ with $(r,\delta_{sr})\in L(s),\; (r,\delta_{tr})\in L(t)$ and $\delta_{sr}+\delta_{tr}=d_G(s,t)$.
\end{enumerate}
\end{definition}

Condition (1) states that $L_G$ is \emph{hierarchical} in terms of the {vertex quasi-order} $\preceq$ defined by $H_G$.
Condition (2) stipulates that $L_G$ is a 2-hop labelling for which the distance between $s$ and $t$ can be computed from distance entries corresponding to $LCA(s,t)$ in their labels alone.



\subsubsection{Upper and Lower bounds} Assume that we are given a balanced tree hierarchy $H_G=(\mathcal{N}, \mathcal{E}, \ell)$. We start with a naive labelling algorithm to construct labels. Let $L(v)=\emptyset$ for each vertex $v\in V(G)$. Then for each tree node $N_i \in \mathcal{N}$, we conduct Dijkstra's search from every $u \in N_i$ to compute the shortest-path distance $d_G(u, v)$ and add it to $L(v)$ for all $v\in \textsc{Subtree}(N_i)$, i.e., $L(v) = L(v) \cup \{(u, d_G(u, v))\}$. Thus, for each vertex $v \in V(G)$, $L(v)$ stores $|A_v|$ distance entries, where $A_v = \{u\in V(G) | u \preceq v \}$.

\begin{lemma}
The space complexity of the naive labelling is bounded by $O(\sum_{v\in V}|A_v|)$. 
\end{lemma}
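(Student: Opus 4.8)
The plan is to reduce the space bound to a sum of label sizes and then identify each label $L(v)$ exactly. First I would recall the naive construction: for every tree node $N_i \in \mathcal{N}$ and every $u \in N_i$, a single Dijkstra search from $u$ inserts the entry $(u, d_G(u,v))$ into $L(v)$ for each $v \in \textsc{Subtree}(N_i)$. Since $\ell : V(G) \rightarrow \mathcal{N}$ is (total and) surjective, every vertex $u \in V(G)$ lies in exactly one tree node, namely $\ell(u)$, so the outer loop processes each $u$ precisely once. Hence the entry keyed by $u$ ends up in $L(v)$ if and only if $v \in \textsc{Subtree}(\ell(u))$, i.e.\ iff $\ell(u)$ is an ancestor of $\ell(v)$ in $H_G$ (including $\ell(v)$ itself), which by definition of the vertex quasi-order $\preceq$ is exactly the condition $u \preceq v$.

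Consequently $L(v) = \{\, (u, d_G(u,v)) \mid u \preceq v \,\}$, so $|L(v)| = |A_v|$ by the definition $A_v = \{u \in V(G) \mid u \preceq v\}$, and each entry occupies $O(1)$ space (one vertex identifier plus one distance value). Summing over all vertices, the total space used by the labelling is $\sum_{v \in V} |L(v)| = \sum_{v \in V} |A_v|$, which is $O\!\left(\sum_{v \in V} |A_v|\right)$ as claimed. One can also phrase this dually: the entry keyed by $u$ is stored in $|\textsc{Subtree}(\ell(u))|$ labels, so $\sum_v |L(v)| = \sum_{u} |\textsc{Subtree}(\ell(u))|$, and a double-counting check shows this equals $\sum_v |A_v|$; I would keep whichever form reads more cleanly.

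The argument is essentially bookkeeping, so I do not expect a genuine obstacle. The only point deserving a sentence of care is that no entry is duplicated — that each $u$ contributes at most one pair to a given $L(v)$ — which holds because $u$ belongs to a single tree node and its Dijkstra search is run exactly once; as a pedantic safeguard, even if a key were revisited one retains only the (true, minimum) distance, leaving the count unchanged. The lemma then follows directly from the definition of $A_v$ together with the way the naive algorithm distributes entries along the root-to-leaf paths of $H_G$.
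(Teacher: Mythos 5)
Your proof is correct and matches the paper's reasoning: the paper justifies the lemma with the single observation that each $L(v)$ stores exactly $|A_v|$ distance entries (one per $u \preceq v$), so the total space is $\sum_{v\in V}|A_v|$; your write-up simply makes explicit why each $u$ contributes exactly one entry to each $L(v)$ with $v\in\textsc{Subtree}(\ell(u))$. No gap, same approach, just more detail than the paper provides.
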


Indeed, the naive labelling approach provides the upper bound of the labelling size for hierarchical cut 2-hop labelling. It suffers from the following drawbacks:
1) it can produce very large labelling sizes for large road networks such as the whole USA and Central Europe road networks,
2) queries may perform unnecessary computations. The question arises: \emph{can we prune distance entries to {reduce labelling size and} accelerate queries?} To answer this question, we first exploit a new labelling property, called \emph{cut cover} property.

\begin{definition}[Cut Cover] Let $V_{cut}\subseteq V(G)$ be a vertex cut on $G$.
Then, for any vertex $v \in G(V)\backslash V_{cut}$ and any cut vertex $u\in V_{cut}$, $(u, \delta_{uv}) \in L(v)$ iff there is no other cut vertex $u'\in V_{cut}$ satisfying:
\begin{equation}
\dist_G(u,v) = 
\dist_G(u,u')  + \dist_G(u',v).
\end{equation}
\end{definition}

Assume that distances among two cut vertices in any vertex cut of $T_G$ are known, denoted as $\delta_{cut}$. The cut cover property ensures that the label of every non-cut vertex contains the distance information to cut vertices - either directly or indirectly through another cut vertex. Thus, the cut cover property relates to the lower bound of the labelling size for hierarchical cut labelling. 

\begin{lemma}
Let $L^*$ be a hierarchical cut 2-hop labelling satisfying the cut cover property. Then the intersection of all hierarchical cut 2-hop labellings of $H_G$ is $L^*$, i.e., $u\in L^*(v)$ iff $u\in L(v)$ for all hierarchical cut 2-hop labellings $L$ of $H_G$.
\end{lemma}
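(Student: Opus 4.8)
The plan is to prove the two inclusions that together establish $L^* = \bigcap_L L$, where the intersection ranges over all hierarchical cut 2-hop labellings $L$ of $H_G$. That is, I would show (a) every $u \in L^*(v)$ belongs to $L(v)$ for every hierarchical cut 2-hop labelling $L$, and (b) any $u$ that belongs to $L(v)$ for \emph{all} such $L$ must lie in $L^*(v)$. For (b) the cleanest route is contrapositive: if $u \notin L^*(v)$ then I must exhibit \emph{one} hierarchical cut 2-hop labelling $L$ with $u \notin L(v)$; the natural candidate is $L^*$ itself, so the real content of (b) is that $L^*$ is actually a hierarchical cut 2-hop labelling — i.e.\ that pruning down to the cut cover still preserves Conditions (1) and (2) of the HC2L definition. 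Condition (1) is immediate since pruning only removes entries. Condition (2) is the substantive part: I would argue that for any query pair $(s,t)$, letting $N = LCA(s,t)$, which by Definition~\ref{def:td}(2) contains some vertex $r$ on a shortest $s$–$t$ path, the cut-cover pruning rule guarantees that if $r$ was pruned from $L^*(s)$ (because some $r' \in N$ with $d_G(s,r) = d_G(s,r') + d_G(r',r)$), then $r'$ itself lies on a shortest $s$–$t$ path through $N$ and $r' \in L^*(s)$ — and iterating this (the cut $N$ is finite, and each step strictly decreases $d_G(s,\cdot)$ along the retained chain) yields an unpruned hub in $N$ common to both labels. This is essentially the standard "canonical labelling" argument adapted to a single vertex cut.

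For inclusion (a), I would take an arbitrary hierarchical cut 2-hop labelling $L$ and an arbitrary $u \in L^*(v)$, and show $u \in L(v)$. Here $u \preceq v$ means $\ell(u)$ is an ancestor of $\ell(v)$; let $N = \ell(u)$, which is a vertex cut on $G$ separating $v$ from the rest of its complementary side (this is exactly the property underlying Definition~\ref{def:td}(2), and is the tree-decomposition-style fact that internal nodes of $H_G$ are cuts). Since $u \in L^*(v)$, the cut-cover condition says there is \emph{no} $u' \in N \setminus \{u\}$ with $d_G(u,v) = d_G(u,u') + d_G(u',v)$; in particular $u$ is the unique closest-to-optimal hub in $N$ on shortest $u$–$v$ paths in the strong sense that no other cut vertex lies on any shortest $u$–$v$ path. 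Now apply Condition (2) of $L$ to the pair $(u,v)$: $LCA(u,v) = N$ (since $\ell(u) = N$ is an ancestor of $\ell(v)$, the LCA is $N$ itself), so there must be $r \in N$ with $(r,\delta_{ur}) \in L(u)$, $(r,\delta_{vr}) \in L(v)$, and $\delta_{ur} + \delta_{vr} = d_G(u,v)$ — i.e.\ $r$ lies on a shortest $u$–$v$ path. By the cut-cover property of $L^*(v)$ (equivalently, by the uniqueness just noted), the only cut vertex in $N$ on a shortest $u$–$v$ path is $u$ itself, forcing $r = u$, hence $u \in L(v)$. This gives (a).

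The main obstacle I anticipate is the interplay between "$r$ on a shortest $u$–$v$ path" and the \emph{iff} in the cut-cover definition: I need that $u \in L^*(v)$ really does imply $u$ is the \emph{only} cut vertex of $N$ lying on any shortest $u$–$v$ path, not merely that no single-step detour exists. This requires a small lemma: if $u' \in N$ lies on \emph{some} shortest $u$–$v$ path then $d_G(u,v) = d_G(u,u') + d_G(u',v)$, which is just the sub-path optimality of shortest paths and is routine — but one must be careful that $u' \ne u$ (if $u'=u$ there is nothing to prove) and that the equality is with respect to $d_G$ and not $d_{G[P]}$ for some partition $P$; since the cut cover is phrased entirely in terms of $d_G$, this is fine. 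A secondary subtlety is handling ties: when several vertices of $N$ lie on shortest paths, the cut-cover rule keeps a minimal "generating" subset, and I should make sure the chain-shortening argument in part (b) terminates and lands on a \emph{retained} hub; bounding the chain length by $|N|$ and noting strict monotonicity of $d_G(s,\cdot)$ along it suffices. Everything else — Condition (1), the reduction of (b) to "$L^*$ is an HC2L", and the identification $LCA(u,v) = \ell(u)$ when $\ell(u) \preceq \ell(v)$ — is bookkeeping.
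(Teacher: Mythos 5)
The paper states this lemma without proof, so there is nothing to compare against directly; judging your argument on its own terms, direction (a) is correct and is the real content of the lemma, but direction (b) as you have planned it contains a genuine misstep. You write that ``the real content of (b) is that $L^*$ is actually a hierarchical cut 2-hop labelling,'' and you then sketch a proof that cut-cover pruning preserves Condition (2). This is unnecessary --- the lemma's hypothesis is precisely that $L^*$ \emph{is} a hierarchical cut 2-hop labelling satisfying the cut cover property, so for the contrapositive of (b) you may simply take $L = L^*$ and you are done. Worse, the statement you set out to prove instead is false in general: the paper explicitly remarks immediately after the lemma that ``not all hierarchical cut 2-hop labellings can satisfy the cut cover property because it does not guarantee 2-hop labelling.'' The concrete failure in your chain argument is that it is one-sided: iterating the pruning rule from $s$ produces a retained hub $r^*\in L^*(s)$ lying on a shortest $s$--$t$ path, but the pruning of $L^*(t)$ is carried out independently, and $r^*$ may well have been pruned from $L^*(t)$ (because some other cut vertex on the same shortest path witnesses its removal there). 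A shortest path crossing the cut through two cut vertices $u,u'$ can end up with only $u\in L^*(s)$ and only $u'\in L^*(t)$, leaving no common hub --- exactly the scenario that forces the paper to introduce tail pruning as a compromise. So if you carry out (b) as written, the proof fails; if you instead invoke the hypothesis, (b) is a one-line observation.

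Direction (a) is sound: for $u\in L^*(v)$ with $N=\ell(u)$, the cut cover condition says no other cut vertex of $N$ satisfies the distance equality with respect to $d_G$, Condition (2) of an arbitrary labelling $L$ applied to the pair $(u,v)$ (whose LCA is $N$ itself, since $u\preceq v$) yields some $r\in N$ on a shortest $u$--$v$ path with $r\in L(v)$, and subpath optimality plus the uniqueness forced by the cut cover gives $r=u$. Your flagged subtlety about the ``iff'' is handled correctly. Two minor loose ends you should still tidy: the cut cover definition only speaks of $v\notin V_{cut}$, so the case where $u$ and $v$ lie in the same cut node needs to be set aside (the paper treats intra-cut distances separately via $\delta_{cut}$); and you should state explicitly that $u\preceq v$ follows from Condition (1) of the HC2L definition applied to $L^*$, which is what licenses $LCA(u,v)=\ell(u)$.
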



If a hierarchical cut 2-hop labelling $L$ satisfies the cut cover property, then $L$ is minimal, i.e., if removing any distance entry from $\{L(v)\}_{v\in V\backslash V_{cut}}$, then there must exist two vertices $s,t\in V(G)$ whose distance $d_G(s,t)$ cannot be computed from $\{L(v)\}_{v\in V\backslash V_{cut}}$ and $\delta_{cut}$.
Note that not all hierarchical cut 2-hop labellings can satisfy the cut cover property because it does not guarantee 2-hop labelling. Below we introduce a new labelling method, called \emph{tail pruned labelling}, which guarantees that: 1) the resultant labelling is still 2-hop, and 2) the labelling size is small and within the upper and lower bounds, thereby enabling efficient querying. 

\subsubsection{Tail Pruned Labelling}



Given a balanced cut $(P_A,P_B,V_{cut})$ on a graph $G=(V,E)$, we rank each cut vertex $v\in V_{cut}$ based on the following equation:
\begin{equation}\label{eq:cut-rank}
r(v) = |\{ u\in V \mid \exists v'\in V_{cut}\setminus\{v,u\} \text{ with } sp(v,v',u) \}|,
\end{equation}
where $sp(v,v',u)$ denotes that $v'$ lies on a shortest path between $v$ and $u$, i.e., $d_G(v,u) = d_G(v,v') + d_G(v',u)$.
Ties are broken arbitrarily to obtain a total ordering.
Intuitively, vertices in a vertex cut are assigned a rank in terms of how frequently they can be hit by other cut vertices in their shortest-paths to other vertices.
We then prune as follows.




\begin{definition}[Tail Pruning]\label{def:tail-pruning}
A cut vertex $v_i\in V_{cut}$ is tail-pruned from $L(u)$ iff
\begin{enumerate}
\item $\exists v_j\in V_{cut}$ with $r(v_j) < r(v_i)$ and $sp(v_i,v_j,u)$, and \label{cond:pruneable}
\item $\forall v_k\in V_{cut}$ with $r(v_i) < r(v_k)$, $v_k$ is tail-pruned from $L(u)$. \label{cond:tail-pruned}
\end{enumerate}
\end{definition}

Condition (1) ensures that a 2-hop labelling remains after pruning, and Condition (2) allows us to omit vertex identifiers in labels.

Algorithm \ref{algo:labelling} shows the pseudo-code for the tail pruned labelling approach. A modified version of the Dijkstra's algorithm is shown in Algorithm~\ref{algo:dist-prune} which computes the distances from a given (cut) vertex $root$ and tracks whether a shortest path from $root$ to a vertex $v\in G$ passes through another (cut) vertex in a given set $P$. In Algorithm~\ref{algo:labelling}, we compute the ranks of cut vertices as per Equation \ref{eq:cut-rank} (Lines~\ref{ln:label-rank-begin}-\ref{ln:label-rank-end}). Afterwards, Algorithm~\ref{algo:dist-prune} is invoked with $P$ containing only cut-vertices of lower ranks, allowing condition (\ref{cond:pruneable}) of Definition~\ref{def:tail-pruning} to be checked (Line~\ref{ln:label-pruneable}).
Finally, the list of distance values is tail-pruned in accordance with Definition~\ref{def:tail-pruning} (Lines ~\ref{ln:label-tail-begin}-\ref{ln:label-tail-end}).

\begin{example}
Consider $V_{cut}=\{5,12,16\}$ in Figure~\ref{fig:partition_tree_labelling}. Evaluating Equation (\ref{eq:cut-rank}) gives us the ranking $r(12) < r(5) < r(16)$.
As a result, $L(1)=\{(12,1),(5,2),(16,3)\}$ is represented as $[1,2,3]$ and can be tail-pruned to $[1,2]$.
For $L(2)=\{(12,4),(5,2),(16,1)\}$, represented as $[4,2,1]$, no tail-pruning is possible.
Although vertex $16$ lies on a shortest path between vertices $5$ and $2$, it has a greater rank than $5$, so $5\in L(2)$ does not meet Condition (\ref{cond:pruneable}) of Definition~\ref{def:tail-pruning}.
Nor does it meet Condition (\ref{cond:tail-pruned}), since $16\in L(2)$ does not meet Condition (\ref{cond:pruneable}).
\end{example}

\begin{algorithm}[t]
\caption{Dijkstra with pruneability tracking}\label{algo:dist-prune}
\SetCommentSty{textit}
\SetKwFunction{FMain}{DistAndPrune}
\SetKwProg{Fn}{Function}{}{end}
\Fn{\FMain{$G$, $root$, $P$}}{
    $dist(v\in G)\gets\infty$ \\
	add $(root,0,\false)$ to Q \\
    \While{$Q$ is not empty}{
        \tcp{$Q$ is ordered by $(d,p)$ with $\true<\false$}
		$(v,d,p)\gets pop_{\min}(Q)$\\
		\If{$dist(v)=\infty$}{
			$dist(v)\gets(d,p)$\\
			\For{$(v,w,l)\in G$}{
			    \If{$v\in P$}{
			        add $(w,d+l,\true)$ to $Q$
			    }
			    \Else{
			        add $(w,d+l,p)$ to $Q$
			    }
			}
		}
    }
    \Return $dist$
}
\end{algorithm}

\begin{algorithm}[t]
\caption{Labelling with tail pruning}\label{algo:labelling}
\SetCommentSty{textit}
\SetKwFunction{FMain}{Labelling}
\SetKwFunction{FDist}{DistAndPrune}
\SetKwProg{Fn}{Function}{}{end}
\Fn{\FMain{$G$, $\vcut$}}{
    \tcp{compute cut cover pruneability}
    \ForEach{$v\in \vcut$\label{ln:label-rank-begin}}{
        $dist\gets\FDist(G,v,\vcut\setminus \{v\})$\\
        $P_{\#}(v)\gets|\{ v\in G \mid dist(v).p=\true \}|$\\
    }
    \tcp{rank cut vertices}
    $[v_0,\ldots,v_n]\gets$ order $\vcut$ by $P_{\#}$ \label{ln:label-rank-end}\\
    \tcp{compute pruneability}
    \For{$i=0\ldots n$}{
        $dist_i\gets\FDist(G,v_i,\{v_0,\ldots,v_{i-1}\})$ \label{ln:label-pruneable}\\
    }
    \tcp{create labels with tail pruning}
    \ForEach{$v\in G$\label{ln:label-tail-begin}}{
        $k\gets\max\{ i\in 0\ldots n \mid dist_i(v).p=\false \}$\\
        $L(v)\gets [dist_0(v).d,\;\ldots,\;dist_k(v).d]$ \label{ln:label-tail-end}\\
    }
    \Return $L$
}
\end{algorithm}

We may construct a hierarchical cut 2-hop labelling by applying the pruned landmark labelling (PLL) method \cite{akiba2013fast}, which can conduct a pruned breadth-first search (BFS) from each cut vertex. 
However, PLL does not suit our data structure design for labelling because the way PLL reduces distance entries still requires a full scan on distance arrays (will discuss in the next paragraph) in the labels of vertices in a distance query. As a result, the query performance deteriorates in comparison with our tail pruned labelling method.

We implement an efficient data structure to leverage the advantages of our balanced tree hierarchy $H_G$. Since $H_G$ is a binary tree, we represent each tree node using a bitstring of length equal to its level (its distance to the root) in $H_G$.
When $\beta=\nicefrac{1}{3}$ and $G$ contains no more than $\nicefrac{2}{3}^{-58}\approx 16.3$ billion vertices, binary strings (including their 6-bit length) can be stored as 64-bit integers.
Furthermore, unlike other approaches, we only store distance values in labels, which reduces storage requirements by half (for vertex identifiers and distance values of equal size). Thus, the label of each vertex is a list of distance arrays, each corresponding to a vertex cut. This enables us to search only a 
reduced set of hub vertices from exactly one vertex cut in the labels of any query pair.

Before constructing labels, we contract the graph by repeatedly removing degree-one vertices. Most shortest-paths from a degree-one vertex to other vertices pass through its closest vertex in the contracted graph, called its \emph{root}.
For each degree-one vertex we store its distance to its root, as well as a reference to the root. This allows us to compute distances between two vertices $v$ and $w$ by using the references and then adding the distances stored.

However, this approach only works when the roots of $v$ and $w$ lie on all (shortest) paths between them, which may not be the case when $v$ and $w$ have the same root. To deal with this case efficiently, we observe that contracted vertices with the same root form a tree (with their common root designated as its root), and store for each contracted vertex its tree parent. This allows us to follow the paths from $v$ and $w$ to their lowest common ancestor $u$ in the tree using reference information alone, and compute their distance as
\[
d_G(v,w) = d_G(v,root) + d_G(w,root) - 2\cdot d_G(u, root).
\]

A similar approach is taken by PHL in \cite{akiba2014fast}. The difference is that they only prune vertices that have degree one in the original graph. This ensures that all paths between different vertices must pass through their roots, but reduces contraction effectiveness (from $\sim$30\% to $\sim$20\% for the graphs we experimented on).

\subsection{Query Processing}
Now we describe how our query function $Q$ efficiently answers distance queries on a road network $G$, given a balanced tree hierarchy $H_G$ and a hierarchical cut 2-hop labelling $L_G$. 

Given a query pair $(s,t)$ with $s,t\in V(G)$, we process the query for $(s,t)$ in two steps: (1) In the first step, we compute the lowest common ancestor $LCA(s,t)$ of the tree nodes $\ell(s)$ and $\ell(t)$ in $H_G$ (Line 2);
(2) In the second step, we compute the distance between $s$ and $t$ using the information in LCA and the label sets $L(s)$ and $L(t)$ in $L_G$ as follows. 
\begin{align}\label{eq:dist}
d_G(s,t)=&min\{\delta_{sr}+\delta_{tr}:\\\nonumber
&\delta_{sr}\in L(s),\delta_{tr}\in L(t), r\in LCA(s,t)\}.
\end{align}



As distances to cut vertices are organized by level, we only need to find the level of $LCA(s, t)$, i.e., the length of its identifying bitstring. This can be computed as the number of leading zeros of the XOR of the bitstrings of $s$ and $t$ -- operations that are natively supported by most CPUs and thus extremely fast.

\begin{example}
Consider $H_G$ and $L_G$ in Figure \ref{fig:partition_tree_labelling}(b) and \ref{fig:partition_tree_labelling}(c). The distance query $(14, 15)$ first finds $LCA(14, 15) = \{12, 5, 16\}$, the first cut in $H_G$, by comparing the bitstrings 01 and 100.
The distances between 14 and $\{12, 5, 16\}$ are $[2,2,\cancel{3}]$, which are found as the first distance array in $L(14)$, with the last distance value $3$ removed by tail-pruning. Similarly, distances between 15 and $\{12, 5, 16\}$ are found to be $[3,1,1]$. We then compute $d_G(14,6)$ as $\min(2+3, 2+1)$, ignoring the last distance value $1$ in $[3,1,1]$ as its counterpart was pruned.
\end{example}

\begin{lemma}\label{th:lca}
Given a balanced tree hierarchy $H_G$, and a query pair $(s, t)$, $LCA(s, t)$ in $H_G$ can be obtained in time $O(1)$.
\end{lemma}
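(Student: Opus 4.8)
The plan is to exploit the bitstring encoding of the tree nodes of $H_G$ introduced for the data structure and reduce the computation of $LCA(s,t)$ to a constant number of machine-word operations in the standard word-RAM model. Concretely, encode the root of $H_G$ by the empty bitstring and, recursively, encode the left and right children of a node with bitstring $b$ by $b0$ and $b1$; write $\mathrm{bits}(v)$ for the bitstring of the tree node $\ell(v)$ together with its length $\mathrm{lvl}(v)$ (its depth in $H_G$). By construction a tree node $N$ is an ancestor of a tree node $N'$ iff the bitstring of $N$ is a prefix of that of $N'$. Consequently the tree node $LCA(s,t)$ is exactly the node whose bitstring is the longest common prefix of $\mathrm{bits}(s)$ and $\mathrm{bits}(t)$, so it suffices to compute this common prefix in constant time.

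Next I would show this longest common prefix -- and hence $LCA(s,t)$ itself -- is $O(1)$-computable. Let $k=\min(\mathrm{lvl}(s),\mathrm{lvl}(t))$ and let $x,y$ be the length-$k$ prefixes of the two bitstrings, obtained from the stored $64$-bit representations by a shift and mask. The position of the first bit where $x$ and $y$ differ is the number of leading zeros of $x\oplus y$ within a $k$-bit window; if $x=y$ then one node is an ancestor of the other and the LCA is that ancestor. In both cases the depth $d$ of $LCA(s,t)$ is $\min\{k,\ \mathrm{clz}_k(x\oplus y)\}$, and the bitstring identifying $LCA(s,t)$ is the length-$d$ prefix of $\mathrm{bits}(s)$, recovered by one further mask. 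By the size bound $n=|V(G)|<2^{58}$ and the choice $\beta=\nicefrac{1}{3}$, each bitstring together with its $6$-bit length field fits in a single $64$-bit word (exactly as quoted earlier for the $\nicefrac{2}{3}^{-58}$ vertex limit), so every step above -- XOR, shift, mask, comparison, and count-leading-zeros -- is a single native instruction, or a constant number thereof. Hence the identifying bitstring of $LCA(s,t)$ is produced in $O(1)$ time; if a reference to the node object (or to its associated vertex cut) is required, it is retrieved in $O(1)$ either from depth $d$ along the stored ancestor chain of $s$, or via an $O(1)$-lookup table keyed by bitstring.

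The only delicate point is fixing the computational model: the argument relies on the word-RAM assumption that a machine word holds $\Omega(\log n)$ bits and that bitwise operations and count-leading-zeros act on a full word in constant time, which is precisely the regime for which $H_G$ and its $64$-bit encoding were designed. I would state this assumption explicitly and then note that the potential edge cases -- $s=t$, a node that is its own ancestor, or the empty-bitstring root -- are all subsumed by the $\min\{k,\cdot\}$ formula and need no special handling. With the model in place, the remainder is the routine verification sketched above that a constant number of word operations suffices, so I do not anticipate a substantive obstacle beyond carefully pinning down the model.
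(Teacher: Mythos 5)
Your proposal is correct and follows essentially the same route as the paper: the paper also identifies each tree node of the binary hierarchy $H_G$ with a bitstring stored (with its 6-bit length) in a 64-bit word, and obtains the level of $LCA(s,t)$ as the number of leading zeros of the XOR of the two bitstrings, a constant number of native word operations. Your additional care about prefix semantics, the $\min\{k,\cdot\}$ edge cases, and the word-RAM assumption only makes explicit what the paper leaves implicit.
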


\begin{lemma}
Given a distance query $(s,t)$, the distance $d_G(s,t)$ can be computed using Equation \ref{eq:dist} in
$O(V_{cut})$, where $V_{cut}$ is the largest cut in the balanced tree hierarchy $H_G$.
\end{lemma}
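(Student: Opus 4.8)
The plan is to combine the correctness of Equation~\ref{eq:dist} with a careful accounting of the cost of evaluating it using our bitstring-based data structure.

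\textbf{Correctness.} First I would argue that Equation~\ref{eq:dist} indeed returns $d_G(s,t)$. By Condition~(2) in the definition of a hierarchical cut 2-hop labelling, there is a vertex $r\in LCA(s,t)$ with $(r,\delta_{sr})\in L(s)$, $(r,\delta_{tr})\in L(t)$ and $\delta_{sr}+\delta_{tr}=d_G(s,t)$, so the minimum on the right-hand side is attained; and no term can be smaller, since $\delta_{sr}=d_G(s,r)$, $\delta_{tr}=d_G(t,r)$ are exact distances and the triangle inequality gives $\delta_{sr}+\delta_{tr}\geq d_G(s,t)$. I would then note that this survives tail pruning: Condition~(1) of Definition~\ref{def:tail-pruning} removes an entry $v_i$ from $L(u)$ only when some lower-ranked cut vertex $v_j$ still lies on a shortest path, so a common hub on a shortest $s$--$t$ path is always retained in the (possibly truncated) distance arrays associated with $LCA(s,t)$.

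\textbf{Locating the relevant arrays.} Next I would show that the two distance arrays needed in Equation~\ref{eq:dist} are found in $O(1)$. In our data structure each label $L(v)$ is stored as a list of distance arrays indexed by the level (depth in $H_G$) of the corresponding vertex cut, and each cut vertex of $LCA(s,t)$ occupies a fixed position within its array. By Lemma~\ref{th:lca} the node $LCA(s,t)$---equivalently its level, obtained as the number of leading zeros of the XOR of the bitstrings of $s$ and $t$---is computed in constant time. Hence each of the two distance arrays is retrieved by a single array index; if $s$ or $t$ was removed by degree-one contraction, we first follow its stored root reference (and, when the two roots coincide, walk to the lowest common ancestor in the contracted-vertex tree), adding only $O(1)$ work per endpoint.

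\textbf{Scanning and the main obstacle.} Finally, evaluating the minimum is a single parallel scan of the two retrieved distance arrays, computing $\delta_{sr}+\delta_{tr}$ at each position and keeping the smallest value; positions beyond the length of the shorter array are ignored, which is safe because tail pruning removes an entry from one label only when its counterpart is also removed (Condition~(2) of Definition~\ref{def:tail-pruning}), so no valid pair is lost. Each array has length at most $|LCA(s,t)|\leq |V_{cut}|$, where $V_{cut}$ is the largest cut in $H_G$, so the scan costs $O(|V_{cut}|)$, dominating the $O(1)$ overhead of the earlier steps and yielding the claimed bound. I expect the only delicate points---and where I would spend the most care---to be exactly this interaction with tail pruning (verifying that truncating distance arrays neither breaks 2-hop coverage nor forces a scan longer than $|V_{cut}|$, and that mismatched array lengths are handled correctly) together with the degree-one contraction step (confirming the root/ancestor lookup is genuinely $O(1)$ and correct when the two roots coincide); the remainder is routine.
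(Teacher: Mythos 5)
Your proposal is correct and matches the argument the paper itself relies on: the paper gives no explicit proof of this lemma, but the surrounding text establishes exactly your two ingredients — the $O(1)$ level/LCA lookup via the XOR of bitstrings (Lemma~\ref{th:lca}) plus a single scan of the two distance arrays for that cut, each of length at most $|V_{cut}|$. The only nitpick is your attribution of the "safe to ignore the longer tail" step to Condition~(2) of Definition~\ref{def:tail-pruning}: Condition~(2) only guarantees the retained entries form a rank-prefix so index-based matching works, while the actual safety comes from Condition~(1), which ensures the minimum-rank common hub on a shortest $s$--$t$ path is never pruned from either label and hence lies within the shorter prefix.
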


\subsection{Parallelization}
To speed up construction, we can use multi-threading to parallelise certain tasks. Whenever we partition a (sufficiently large) subgraph, we create a new thread in order to process two partitions in parallel. In each thread, we further parallelise computing distances for labels, shortcuts, and pruning in each thread, i.e., we perform a Dijkstra search from each cut (or border) node. These searches can easily be carried out in parallel, with workloads being practically identical. Together these tasks account for the majority of labelling construction process. We must note however that effective parallelization is limited to a handful of physical threads, as a significant part of the work (>10\%) is not or not fully parallelised -- mainly due to early cut computations.

\section{Experiments}\label{sec:experiments}

\noindent\textbf{Hardware and Platform~}All the experiments are performed on a Linux server Intel Xeon W-2175 with 2.50GHz CPU, 28 cores, and 512GB of main memory. All the algorithms were implemented in C++11 and compiled using g++ 9.4.0 with the -O3 option. 

\vspace{0.1cm}
\noindent\textbf{Datasets~}We use 10 undirected real road networks, nine of them are from the US and publicly available at the webpage of the 9th DIMACS Implementation Challenge \cite{demetrescu2009shortest} and one is from Western Europe managed by PTV AG \cite{ptvplanung}. Table \ref{table:datasets} summarizes these datasets in which the largest dataset is the whole road network in the USA. 
\begin{updated}
Further, we consider two versions for these datasets which have different units of edge weights i.e., distances and travel times. 
\end{updated}

\begin{table}[h!]
\centering
\caption{Summary of datasets.}
\label{table:datasets}\vspace{-0.3cm}
\scalebox{0.85}{
\begin{tabular}{| l l | r r r r|} 
    \hline
    Dataset & Region & $|V|$ & $|E|$ & \emph{diam.} & Memory \\
    \hline\hline
    NY & New York City & 264,346 & 733,846 & 720 & 17 MB \\
    BAY & San Francisco & 321,270  & 800,172 & 721 & 18 MB \\
    COL & Colorado & 435,666 & 1,057,066 & 1,245 & 24 MB \\
    FLA & Florida & 1,070,376 & 2,712,798 & 2,058 & 62 MB \\
    CAL & California & 1,890,815 & 4,657,742 & 2,315 & 107 MB \\
    E & Eastern USA & 3,598,623 & 8,778,114 & 4,461 & 201 MB \\
    W & Western USA & 6,262,104 & 15,248,146 & 4,420 & 349 MB \\
    CTR & Central USA & 14,081,816 & 34,292,496 & 5,533 & 785 MB \\
    USA & United States & 23,947,347 & 58,333,344 & 8,440 & 1.30 GB \\ 
    EUR & Western Europe & 18,010,173 & 42,560,279 & 5,175 &  974 MB \\
    \hline
\end{tabular}}
\end{table}

\begin{table*}[ht]
 \centering
 \caption{Comparison of query times, labelling sizes and construction times with distances as edge weights between our method, i.e., HC2L, and the baseline methods. HC2L$^p$ represents HC2L with parallelism.}
 \label{table:performance_dist}\vspace{-0.3cm}
 \begin{tabular}{| l || r r r r | r r r r | r r r r r|}  \hline
	\multirow{2}{*}{Dataset}&\multicolumn{4}{c|}{Query Time [$\mu$s]}&\multicolumn{4}{c|}{Labelling Size}&\multicolumn{5}{c|}{Construction Time [s]} \\\cline{2-14}
    & HC2L & H2H & PHL & HL & HC2L & H2H & PHL & HL & HC2L & HC2L$^p$ & H2H & PHL & HL \\

   \hline\hline
    NY & 0.225 & 0.432 & 0.983 & 0.765 & 144 MB & 341 MB & 320 MB & 233 MB & 15 & 6 & 16 & 34 & 32 \\
    BAY & 0.220 & 0.563 & 0.707 & 0.665 & 113 MB & 339 MB & 235 MB & 219 MB & 12 & 4 & 12 & 18 & 27 \\
    COL & 0.351 & 0.750 & 0.909 & 0.720 & 236 MB & 217 MB & 403 MB & 341 MB & 27 & 12 & 21 & 38 & 45 \\
    FLA & 0.371 & 0.754 & 0.965 & 0.827 & 487 MB & 1.25 GB & 1.14 GB & 907 MB & 68 & 23 & 46 & 121 & 137 \\
    CAL & 0.442 & 1.125 & 1.106 & 0.958 & 1.24 GB & 3.87 GB & 2.58 GB & 1.78 GB & 215 & 57 & 146 & 327 & 318 \\
    E & 0.555 & 1.241 & 1.671 & 1.218 & 3.37 GB & 9.81 GB & 8.44 GB & 4.74 GB & 654 & 163 & 409 & 1,578 & 1,149 \\
    W & 0.583 & 1.382 & 1.661 & 1.163 & 5.71 GB & 18.3 GB & 13.5 GB & 7.50 GB & 1,197 & 261 & 702 & 2,314 & 1,654 \\
    CTR & 0.760 & 1.630 & 2.503 & 1.613 & 24.4 GB & 73.9 GB & 55.9 GB & 25.5 GB & 6,203 & 1,658 & 4,029 & 15,882 & 7,591 \\
    USA & 0.737 & 1.940 & 2.389 & 1.663 & 45.1 GB & 155 GB & 95.6 GB & 44.7 GB & 11,203 & 1,977 & 7,737 & 26,515 & 13,157 \\ 
    EUR & 0.922 & 2.414 & 2.239 & 1.673 & 44.1 GB & 160 GB & 70.9 GB & 34.1 GB & 12,242 & 3,083 & 9,194 & 20,466 & 8,728 \\\hline
 \end{tabular}
\end{table*}

\begin{table}[t]
 \centering
 \caption{Comparing LCA storage requirements and average hub size by HC2L and the baseline algorithms for all the datasets with distances as edge weights.}
 \label{table:lca_hopsize}\vspace{-0.3cm}
 \setlength\tabcolsep{4pt}
 \begin{tabular}{| l || r r | r r r r r |}  \hline
	\multirow{2}{*}{Dataset}&\multicolumn{2}{c|}{LCA Storage}&\multicolumn{5}{c|}{Average Hub Size (AHS)} \\\cline{2-8}
    & HC2L & H2H & HC2L & P2H & H2H & PHL & HL \\

   \hline\hline
    NY & 2.02 MB & 40.3 MB & 12 & 19 & 34 & 50 & 137 \\
    BAY & 2.45 MB & 49.0 MB & 9 & -- & 51 & 22 & 105 \\
    COL & 3.32 MB & 66.5 MB & 20 & 24 & 66 & 39 & 118 \\
    FLA & 8.17 MB & 180 MB & 13 & 21 & 50 & 37 & 132 \\
    CAL & 14.4 MB & 317 MB & 23 & 36 & 109 & 39 & 154 \\
    E & 27.5 MB & 632 MB & 36 & 42 & 164 & 98 & 213 \\
    W & 47.8 MB & 1.12 GB & 37 & 67 & 171 & 99 & 192 \\
    CTR & 108 MB & 2.62 GB & 66 & 101 & 225 & 185 & 132 \\
    USA & 183 MB & 4.64 GB & 62 & 125 & 331 & 160 & 138 \\ 
    EUR & 137 MB & 3.49 GB & 142 & -- & 749 & 154 & 310 \\\hline
 \end{tabular}
\end{table} 

\vspace{0.1cm}
\noindent\textbf{Baseline Methods~}
We compared our proposed algorithm HC2L with four state-of-the-art algorithms for shortest-path distance queries in road networks as follows, 1) Hub Labelling (HL) \cite{abraham2012hierarchical}, 2) Pruned Highway Labelling (PHL) \cite{akiba2014fast}, 3) Hierarchical 2-Hop Labelling (H2H) \cite{ouyang2018hierarchy}, and 4) Projected Vertex Separator Based 2-Hop Labelling (P2H) \cite{chen2021p2h}.

The code for H2H, PHL and HL was publicly available and implemented in C++. Unfortunately, we could not obtain the implementation of P2H; we reported only the part of experimental results presented in \cite{chen2021p2h}. They implemented P2H in C++ and their experiments were conducted on a machine with Quad Intel(R) Xeon(R) Platinum 8160 24-core @ 2.10GHz CPU and 768 GB RAM, running CentOS Linux 7. We use the same parameter settings as suggested by the authors of these methods, unless otherwise stated. We select balance partition threshold $\beta = 0.2$ for HC2L. Furthermore, we set the number of threads equal to the available 28 cores.


\vspace{0.1cm}
\noindent\textbf{Benchmark Generation~}To evaluate the query performance, we randomly sampled 1,000,000 pairs of vertices from all pairs of vertices in each road network, i.e., $V\times V$.
We also evaluate the query performance of the algorithms by varying the distance between the source and target vertices in a query similar to \cite{ouyang2018hierarchy,Pohl1969BidirectionalAH}.
Specifically, for each road network, we generate 10 sets of queries $Q_1, Q_2,\dots, Q_{10}$ as follows: we set $l_{min}$ to be 1000 meters, and set $l_{max}$ to be the maximum
distance of any pair of vertices in the map. Let $x = (\frac{l_{max}}{l_{min}})^{1/10}$. For each $1 \leq i \leq 10$, we generate 10,000 queries to form each set $Q_i$, in which the distance of the source and target vertices for each query
falls in the range $(l_{min} \cdot x^{i-1},\; l_{min} \cdot x^{i}]$. For each algorithm, we report the average query processing time.

\subsection{Performance Evaluation}
We compare the performance of our proposed algorithm with the baseline methods in terms of the query time, labelling size, and construction time. The experimental results are presented in Table~\ref{table:performance_dist}, Table~\ref{table:performance_time}, and Figure \ref{fig:varying_distanceQuery}.

\begin{table*}[ht]
\begin{updated}
 \centering
 \caption{Comparison of query times, labelling sizes and construction times with travelling times as edge weights between our method and the baseline methods.}
 \label{table:performance_time}\vspace{-0.3cm}
 \begin{tabular}{| l || r r r r | r r r r | r r r r r|}  \hline
	\multirow{2}{*}{Dataset}&\multicolumn{4}{c|}{Query Time [$\mu$s]}&\multicolumn{4}{c|}{Labelling Size}&\multicolumn{5}{c|}{Construction Time [s]} \\\cline{2-14}
    & HC2L & H2H & PHL & HL & HC2L & H2H & PHL & HL & HC2L & HC2L$^p$ & H2H & PHL & HL \\

   \hline\hline
    NY & 0.210 & 0.531 & 0.574 & 0.593 & 133 MB & 272 MB & 137 MB & 152 MB & 16 & 6 & 14 & 11 & 19 \\
    BAY & 0.213 & 0.537 & 0.475 & 0.549 & 106 MB & 258 MB & 103 MB & 154 MB & 13 & 4 & 11 & 7 & 18 \\
    COL & 0.327 & 0.815 & 0.554 & 0.560 & 189 MB & 523 MB & 166 MB & 221 MB & 26 & 14 & 21 & 12 & 27 \\
    FLA & 0.361 & 0.776 & 0.625 & 0.577 & 466 MB & 1.07 GB & 478 MB & 606 MB & 74 & 23 & 46 & 40 & 84 \\
    CAL & 0.427 & 1.033 & 0.638 & 0.697 & 1.16 GB & 3.11 GB & 905 MB & 1.07 GB & 231 & 51 & 130 & 78 & 173 \\
    E & 0.479 & 1.203 & 0.820 & 0.819 & 3.01 GB & 8.47 GB & 2.41 GB & 2.43 GB & 706 & 176 & 360 & 256 & 503 \\
    W & 0.551 & 1.281 & 0.837 & 0.821 & 5.29 GB & 16.7 GB & 3.85 GB & 4.10 GB & 1,277 & 315 & 693 & 401 & 770 \\
    CTR & 0.701 & 2.054 & 1.037 & 0.977 & 23.2 GB & 84.4 GB & 11.3 GB & 11.4 GB & 7,296 & 1,728 & 4,127 & 1,513 & 2,954 \\
    USA & 0.723 & 1.864 & 1.141 & 1.018 & 38.4 GB & 135 GB & 22.4 GB & 20.3 GB & 11,249 & 2,039 & 6,988 & 3,124 & 4,752 \\ 
    EUR & 0.872 & 2.720 & 1.297 & 1.106 & 38.3 GB & 169 GB & 20.9 GB & 17.9 GB & 13,292 & 3,116 & 10,649 & 3,488 & 4,502 \\\hline
 \end{tabular}
\end{updated}
\end{table*}
\begin{table}[t]
 \centering
 \caption{Comparing Tree Height and Max Cut Size on all the datasets with distances as edge weights.}
 \label{table:height_width}\vspace{-0.3cm}
 \begin{tabular}{| l || c c | c c |}  \hline
	\multirow{2}{*}{Dataset}&\multicolumn{2}{c|}{Tree Height}&\multicolumn{2}{c|}{Max Cut Size/Width} \\\cline{2-5}
    & HC2L & H2H/P2H & HC2L & H2H/P2H \\

   \hline\hline
    NY & 24 & 399 & 40 & 171 \\
    BAY & 25 & 297 & 41 & 120 \\
    COL & 27 & 376 & 42 & 192 \\
    FLA & 29 & 345 & 34 & 140 \\
    CAL & 30 & 603 & 56 & 266 \\
    E & 30 & 869 & 75 & 328 \\
    W & 32 & 895 & 89 & 342 \\
    CTR & 35 & 1,771 & 153 & 667 \\
    USA & 35 & 2,135 & 178 & 855 \\ 
    EUR & 36 & 2,892 & 280 & 1,181 \\\hline
 \end{tabular}
\end{table} 

\subsubsection{Querying Time}
In Table~\ref{table:performance_dist}, we report for each method and dataset the average query time over 1 million random distance queries, \begin{updated}where edge weights are distances\end{updated}. We confirm that HC2L is the fastest on all the datasets.
In most cases, we are 1.5-2.5 times faster than HL, 2-3 times faster than H2H, and 2-4 times faster than PHL. The is because we process a significantly smaller number of distance entries in the labels of a query pair when computing their shortest-path distance.
\begin{updated}
Table~\ref{table:performance_time} reports the results when we consider travel times, rather than distances, as edge weights for each dataset. We notice that the average query times for almost all the methods become faster compared to their corresponding results in Table~\ref{table:performance_time}. The reason for this speedup lies in the reduced labelling sizes which will be further discussed in Section~\ref{subsubsec:labelling size}. Evidently, HC2L is still the fastest on all the datasets.
\end{updated}

Table \ref{table:lca_hopsize} shows the average number of hubs for which the sum of distances is computed when evaluating Eq.~(\ref{equ:2-hop}) or Eq.~(\ref{eq:dist}). Compared to H2H, HC2L produces minimal cuts which are very small in practice as shown in Table \ref{table:height_width}, and thus the search space of HC2L is significantly reduced for distance queries. The query time of PHL and HL depends on the label sizes of a query pair. We can see from Table \ref{table:lca_hopsize} that the average number of hubs of PHL and HL are much bigger than HC2L, which make them significantly underperform HC2L.
Furthermore, PHL is generally slower than HL as it employs highways instead of vertices as hubs, making distance computation more complex than simply adding two distance values.

\vspace{0.1cm}
\noindent\textbf{Varying Distance Querying~}
\begin{updated}
\end{updated}
In Figure \ref{fig:varying_distanceQuery}, we report results for distance query sets containing query pairs with varying distances to test the performance of all the algorithms. We can clearly see that HC2L significantly outperforms all the baseline methods in every query set. Particularly, our method HC2L shows good performance for both short and long distance query sets.
Regardless of distance, only cut vertices of the LCA of the query vertices need to be considered as hubs. Vertex cuts tend to be smaller at lower levels of the hierarchy, thus making local queries generally faster. Exceptions to this exist though -- the NY dataset has a top-level cut of size 5, making distant queries very fast as well. Additionally, queries involving top-level cuts enjoy better caching, due to the memory layout of labels.
H2H and HL show similar behaviour. In contrast, PHL has rather poor query performance for local queries -- this happens because it optimizes its index for high-speed highways, which are less likely to intersect with shortest paths between local vertex pairs. 
\begin{updated}
Optimizing indexes for real-world workloads is an interesting problem that often requires specific algorithmic designs as discussed in \cite{DBLP:conf/icde/ZhengWGMHZJ22}.
\end{updated}

\subsubsection{Labelling Size}\label{subsubsec:labelling size}
Table~\ref{table:performance_dist} shows that the labelling sizes produced by our proposed method HC2L is significantly smaller than the baseline methods when indexing for distance queries.
In most cases, the labelling size of HC2L is about 2-4 times smaller than H2H, 2-3 times smaller than PHL and 1-2 times smaller than HL.
The only exception is the EUR dataset, for which HL produces a smaller labelling.
Despite this, HC2L still beats HL in terms of query time on EUR. This is because HC2L only uses a fraction of labels during query answering, although the gap does become smaller.
We have also analysed the extra overhead of labelling sizes by H2H and HC2L to find LCA in constant time in Table~\ref{table:lca_hopsize}.
It shows that the overheads incurred by H2H are about 20 times greater -- this happens because the tree hierarchies of H2H are neither binary nor balanced, thus requiring a different indexing approach.
We found that without tail pruning index sizes grow by 10-15\%, but construction time is reduced by around 20\%.
\begin{updated}
Table~\ref{table:performance_time} shows labelling sizes for datasets using travel times as edge weights. Here HC2L produces slightly smaller labelling sizes than the corresponding labelling sizes in Table~\ref{table:performance_dist}, H2H is roughly the same, but for PHL and HL labelling sizes are reduced significantly when changing edge weights from distances to travelling times. The reason for this is that PHL and HL can exploit better orderings using travel times as edge weights, which leads to better pruning.
Further, we notice that HL pruning is more effective than the tail pruning approach employed by HC2L. This is because HL pruning affects the whole vertex ordering, while the tail pruning by H2CL only affects the vertex ordering within each cut; thus, one can expect HL to perform better in the cases where a large percentage of labels can be pruned.
\end{updated}

\subsubsection{Construction Time}
In Table~\ref{table:performance_dist} we compare the construction time of our method HC2L with the baseline methods when indexing for distance.
We observe that the single-threaded implementation of HC2L is slower than H2H, faster than PHL and comparable to HL, but the parallel variant of our method, denoted by HC2L$^p$, significantly outperforms the baseline methods on all the datasets.
The construction time of our algorithms includes both the time for obtaining a balanced tree hierarchy and the time for constructing a hierarchical cut labelling.
\begin{updated}
When considering travel times as edge weights, as shown in Table~\ref{table:performance_time}, PHL and HL perform significantly faster in construction, compared to considering distances as edge weights in Table~\ref{table:performance_dist}; nonetheless, these methods are still slower than our parallel implementation.
This increase in construction time is consistent with the decrease in labelling size due to better pruning, as discussed earlier.
\end{updated}

\begin{figure*}[ht!]
\centering
\includegraphics[width=0.92\textwidth]{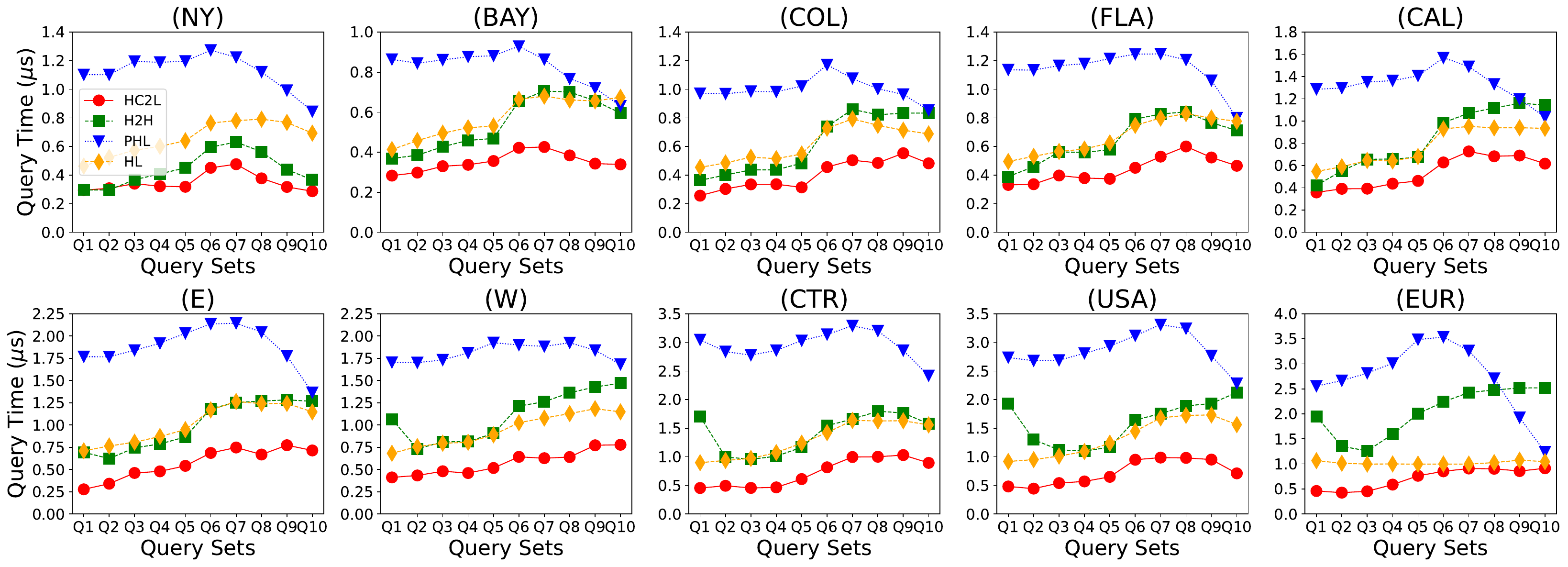}\vspace{-0.3cm}
\caption{Distance query performance under varying distances for all the datasets with distances as edge weights.}
\label{fig:varying_distanceQuery}\vspace{-0.15cm}
\end{figure*}
\begin{figure*}[ht]
\includegraphics[width=0.92\textwidth]{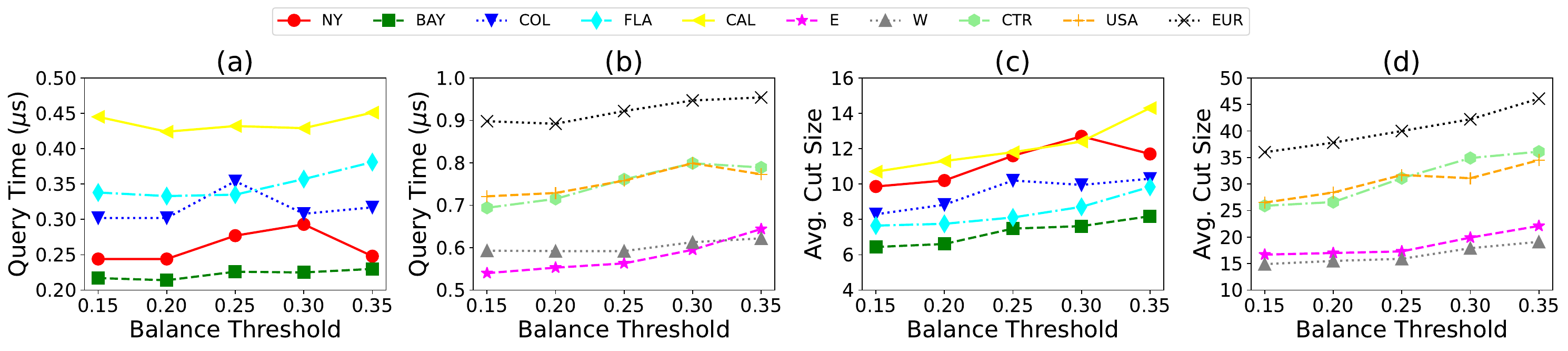}\vspace{-0.3cm}
\caption{Query performance with respect to cut size under varying balance partition thresholds for all the datasets with distances as edge weights.}
\label{fig:query_time_wrt_balance}
\end{figure*}

\subsection{Analysis of the Proposed Approaches}
We evaluate how the balance threshold $\beta$, cut size and tree height affect the performance of our method.
Figure \ref{fig:query_time_wrt_balance} shows the average query times and cut sizes under varying balance thresholds $\beta$ on all the datasets. We can observe that whenever a cut size increases/decreases under a particular threshold, a similar trend reflects in query time of that threshold. The average query times decrease or remain the same on the majority of datasets when we increase $\beta$ from $0.15$ to $0.20$ and for thresholds over $0.20$ seem increasing. This also aligns well with the cut sizes, and using the larger balance threshold $0.20$ leads to more balanced trees. 

In Table \ref{table:height_width}, we also report the height and maximum cut size/width of our method HC2L with a balance threshold $0.20$ against the baseline methods H2H and P2H. It is clear that HC2L produces much smaller cuts and has significantly smaller heights for all datasets compared to H2H and P2H. This is because our method is based on novel techniques to find balanced partitioning and small cuts which result in a balanced binary tree with a very small height.
In contrast, H2H and P2H use heuristic techniques to decompose a road network into a tree which may produce arbitrary height and width of a very large number in practice as shown in Table \ref{table:height_width}.

\subsection{Extension to Directed Graphs}

Our approach can be extended to directed graphs by storing distances from both directions in each label.
We can compute them by performing searches on both directions  when adding shortcuts as well as conducting label construction.
To compute vertex cuts, we can treat all edges as undirected to ensure that they separate paths in both directions.
However, road networks tend to be \emph{almost} undirected (with some famous exceptions such as Stockholm), in which case the two distances stored within each label will frequently be identical -- this can be exploited for optimizations.

\begin{updated}
\subsection{Remarks}
Below we comment on some possible extensions and open problems related to our method proposed in this work:
\begin{itemize}[leftmargin=*]

\item Although our parallel implementation reduces construction time to less than one hour for all graphs, 
further increasing the number of cores has a limited impact on performance. This is because most cores would remain poorly utilized. Particularly, better utilization requires the vertex cut computation to be parallelizable, which seems tricky to do and is outside the scope of this paper. 

\item In dynamic settings, e.g., due to roads being temporarily closed or suffering from slower travel speeds, we need to update our labellings, preferably without recomputing them from scratch. 
Our balanced tree hierarchy construction does not depend on edge weights, except for shortcuts. This should enable us to preserve a balanced tree hierarchy (with some adjustments for shortcuts) and limit updates only to distance values. This is in contrast to existing approaches such as HL or PHL which rely on edge weights for node or highway ordering.
\item Existing labelling-based methods for distance queries on road networks still require labellings of sizes larger than the original graph. Thus, how to reduce labelling sizes needed for distance queries on road networks without compromising query performance still remains an open problem. 
\end{itemize}
\end{updated}
\section{Conclusion}
In this paper, we analyse the drawbacks of the current state-of-the-art labelling-based solutions in order to exploit hierarchical structures of road networks for efficiently answering distance queries.
We  propose a novel solution called hierarchical cut labelling (HC2L) to overcome the drawbacks of existing solutions. Our proposed solution uses a novel \emph{balanced tree hierarchy} to find a partial vertex order which helps in significantly reducing labelling size and selecting a small subset of labels for a distance query pair.
Accordingly, query processing is accelerated, regardless of distance between the node pairs queried.
As demonstrated experimentally, index size and construction time are competitive as well.
For future work, we plan to investigate dynamic updates to our index structure, and efficient algorithms for the balanced minimal S-T cut problem.

\clearpage

\bibliographystyle{ACM-Reference-Format}
\balance
\bibliography{references}

\end{document}